\documentclass{article}
\usepackage[utf8]{inputenc}
\usepackage{fullpage}
\usepackage{amsmath, amsthm, amssymb}
\usepackage[table]{xcolor}
\usepackage{bm}
\usepackage{caption, subcaption}
\usepackage[sort&compress, square, numbers]{natbib}
\usepackage{multirow, multicol, booktabs}
\usepackage{soul}
\PassOptionsToPackage{hyphens}{url}
\usepackage{hyperref}
\usepackage{graphicx}
\newtheorem{theorem}{Theorem}[section]
\newtheorem{lemma}[theorem]{Lemma}
\newtheorem{claim}[theorem]{Claim}

\newtheorem{definition}[theorem]{Definition}
\newtheorem{assumption}[theorem]{Assumption}
\newtheorem{remark}[theorem]{Remark}

\newcommand{\AppendixName}[1]{\label{appendix:#1}}
\newcommand{\Appendix}[1]{Appendix~\ref{appendix:#1}}
\newcommand{\AssumptionName}[1]{\label{assumption:#1}}
\newcommand{\Assumption}[1]{Assumption~\ref{assumption:#1}}
\newcommand{\ClaimName}[1]{\label{claim:#1}}
\newcommand{\Claim}[1]{Claim~\ref{claim:#1}}

\newcommand{\DefinitionName}[1]{\label{defn:#1}}
\newcommand{\Definition}[1]{Definition~\ref{defn:#1}}
\newcommand{\EquationName}[1]{\label{eq:#1}}
\newcommand{\Equation}[1]{Eq.~\eqref{eq:#1}}
\newcommand{\FigureName}[1]{\label{fig:#1}}
\newcommand{\Figure}[1]{Figure~\ref{fig:#1}}
\newcommand{\LemmaName}[1]{\label{lemma:#1}}
\newcommand{\Lemma}[1]{Lemma~\ref{lemma:#1}}
\newcommand{\SectionName}[1]{\label{sec:#1}}
\newcommand{\Section}[1]{Section~\ref{sec:#1}}
\newcommand{\TableName}[1]{\label{table:#1}}
\newcommand{\Table}[1]{Table~\ref{table:#1}}
\newcommand{\TheoremName}[1]{\label{thm:#1}}
\newcommand{\Theorem}[1]{Theorem~\ref{thm:#1}}

\newcommand{\bs}{b^*}
\newcommand{\bo}{b^o}
\newcommand{\vs}{v^*}

\newcommand{\rFPA}{\mathbf{rFPA}}

\newcommand{\randauction}{\mathbf{rTruth}}
\newcommand{\eps}{\varepsilon}

\newcommand{\dd}{\mathrm{d}}
\newcommand{\Val}{\textsc{Val}}
\newcommand{\Cost}{\textsc{Cost}}
\newcommand{\Spend}{\textsc{Spend}}

\newcommand{\bme}{\bm{e}}

\newcommand{\alloc}{\pi}
\newcommand{\price}{\mathrm{price}}

\newcommand{\setst}[2]{\left\{ {#1} \,:\, {#2} \right\}}

\newcommand{\LW}{\textsc{LW}}
\newcommand{\OPT}{\textsc{Opt}}
\newcommand{\EQ}{\textsc{Eq}}

\newcommand{\cA}{\mathcal{M}}
\newcommand{\cI}{\mathcal{I}}
\newcommand{\cM}{\mathcal{M}}

\newcommand{\bR}{\mathbb{R}}

\newcommand{\troas}{\text{TROS}}

\title{Efficiency of non-truthful auctions under auto-bidding}
\author{
    Christopher Liaw\thanks{Google, \texttt{\{cvliaw,aranyak,perlroth\}@google.com}},
    Aranyak Mehta\footnotemark[1],
    Andres Perlroth\footnotemark[1]
}

\numberwithin{equation}{section}

\begin{document}

\maketitle
\begin{abstract}
    Auto-bidding is now widely adopted as an interface between advertisers and internet advertising as it allows advertisers to specify high-level goals, such as maximizing value subject to a value-per-spend constraint. Prior research has mostly focused on auctions which are \emph{truthful} (such as SPA) since \emph{uniform} bidding is optimal in such auctions, which makes it manageable to reason about equilibria. A tantalizing question is whether one can obtain more efficient outcomes by leaving the realm of truthful auctions.

This is the first paper to study non-truthful auctions in the prior-free auto-bidding setting. Our first result is that non-truthfulness provides no benefit when one considers deterministic auctions. Any deterministic mechanism has a price of anarchy (PoA) of at least $2$, even for $2$ bidders; this matches what can be achieved by deterministic truthful mechanisms. In particular, we prove that the first price auction has PoA of exactly $2$. For our second result, we construct a randomized non-truthful auction that achieves a PoA of $1.8$ for $2$ bidders. This is the best-known PoA for this problem. The previously best-known PoA for this problem was $1.9$ and was achieved with a truthful mechanism. Moreover, we demonstrate the benefit of non-truthfulness in this setting by showing that the truthful version of this randomized auction also has a PoA of $1.9$. Finally, we show that no auction (even randomized, non-truthful) can improve upon a PoA bound of $2$ as the number of advertisers grow to infinity. 

\end{abstract}

\section{Introduction}
Recent years have seen significant adoption of auto-bidding as an interface for advertisers to bid into internet advertising auctions, including in sponsored search (see, e.g., \cite{goog,fb}). Auto-bidding allows advertisers to express higher-level goals and constraints, as opposed to bids for individual keywords. An auto-bidding agent converts these goals and constraints to per-auction bids in real time. A popular and prototypical auto-bidding setting is that of target-return-on-spend (TROS) where the advertiser expresses the goal of maximizing value with a constraint on the average return on spend (value-per-dollar).
In this setting, we have a set of queries (or ad opportunities), with an auction run separately for each query. The goal of an auto-bidding agent is to place a bid in each auction so as to optimize for the overall \troas~objective for the advertiser.

In this context of automated bidding, current research has started to understand how the design of the underlying auction influences bidding behavior and the efficiency outcomes in system equilibrium when every advertiser uses an auto-bidding agent. In the basic prior free auction model, past work has focused on truthful auctions, that is, auctions where it is optimal for a quasilinear bidder to bid their value (see Sec.~\ref{sec:prel} for details). Even though \troas-bidders are not quasilinear, it turns out that truthfulness is a useful property.
In truthful auctions, a simple \emph{uniform} bidding strategy across queries is optimal \cite{AggarwalBM19}. This is one in which the bid in each auction is proportional to the value. This tractable bidding strategy allows to reason about system equilibria and to bound the loss of welfare. \citet{AggarwalBM19} are the first to mathematically formulate the auto-bidding setting and they show that for the second price auction (SPA) the price-of-anarchy (PoA) is 2, i.e.,
the total value generated in any equilibrium is at least half of the optimal value via some centralized (non auction-based) allocation. Subsequently, \citet{Mehta22} presents a randomized truthful auction, with an improved PoA of approx. 1.89, when there are only two bidders per auction. 

However, contrary to truthful auctions, the PoA for non-truthful auctions is not well-understood.
Understanding the equilibria of non-truthful auctions presents new technical challenges.
In particular, uniform bidding is no longer an optimal bidding strategy. For example, in the first price auction (FPA), a bidder may be able to underbid on some queries with less competition to subsidize more expensive queries where they can overbid. Interestingly, if the advertisers are constrained to bid uniformly in FPA then one can show that the (unique) equilibria has an optimal allocation \cite{DengMMZ21}.
Furthermore, this question of studying the equilibria of non-truthful auctions is of important practical relevance as these auctions are widely used, for example FPA \cite{PaesLemeST20} and generalized second price (GSP) \cite{Varian07}.
Thus, a tantalizing question is whether one can obtain more efficient outcomes by leaving the realm of truthful auctions.

\subsection{Our results}
In this paper, we study the welfare properties of non-truthful auctions in the prior-free auto-bidding setting, thus completing the picture initialized in previous works.

We first show in \Theorem{poa_det_truthful_lb} that for any (non-truthful) deterministic auction, even with 2 bidders, the PoA is no better than 2.
This immediately implies that non-truthfulness provides no additional power over SPA, among deterministic auctions. We complement this by showing, in \Theorem{poa_fpa}, that for FPA, the PoA is precisely 2. Note that this already resolves the motivating practical question of whether the supposedly optimal performance (PoA of 1) of FPA when restricted to uniform bidding persists when we allow the bidders optimal (non-uniform) strategies. 

We next consider if non-truthfulness provides any welfare advantages when coupled with randomization, noting the result from~\citet{Mehta22} that randomization helps improve over SPA in a two bidder setting for truthful auctions.
In contrast to \Theorem{poa_det_truthful_lb}, we show in \Theorem{rFPA_PoA_Eq} that there exists a non-truthful, randomized auction that obtains a PoA of at most $1.8$.
This is currently the best known PoA in this setting.
The auction, which we call the randomized first price auction (rFPA), is relatively simple.
If the higher bid is greater than the lower bid by a prespecified threshold $\alpha > 1$ then the higher bid wins the query outright.
Otherwise, the allocation is randomized depending on how much greater the higher bid is relative to the lower bid.
In either case, the winner pays their bid.
Moreover, the truthful auction obtained by using the same randomized allocation function as rFPA (with the same $\alpha$) but combined with a truthful pricing function, yields a worse PoA of 1.98.\footnote{
    For completeness, we show in \Appendix{randauction_poa} that by choosing the optimal $\alpha$ can improve the PoA of this randomized truthful auction to approximately $1.9$.
}
This shows the power of combining both randomization and non-truthfulness in the design of auctions for auto-bidders.

Our final result is that as the number of bidders $n$ increases to infinity, neither randomization nor non-truthfulness provide any benefit.
In \Theorem{RandPoA}, we prove a lower bound of $2$ for essentially any auction (be it randomized or non-truthful). 
Finally, in \Section{experiments}, we provide experimental results contrasting between SPA, rFPA, and the truthful version of rFPA for synthetic data.

A summary of previous and new results is provided in \Table{results}. We emphasize that all our results for non-truthful auctions are under the much broader set of non-uniform bidding strategies, which contributes in a significant manner and completes the landscape initiated by the previous work.

\begin{table}[t]
\caption{Summary of previous and new results.}
\centering
\begin{tabular}{lllll}
\toprule
  & \multicolumn{2}{c}{\bf Truthful} & \multicolumn{2}{c}{\bf Non-truthful} \\ [0.5ex] 
  \cmidrule(lr){2-3}\cmidrule(lr){4-5}
  & PoA & \multicolumn{1}{c}{Reference} & PoA &\multicolumn{1}{c}{Reference} \\
\midrule
\multirow{2}{*}[-2pt]{Deterministic} & $\leq 2$ & SPA~\cite{AggarwalBM19} & $\leq 2$ & FPA (\Theorem{poa_fpa}) \\
  & $\geq 2$ & \cite[\S 5.3]{AggarwalBM19} &  $\geq 2$ & \Theorem{poa_det_truthful_lb}\\[1ex] 
 Randomized (2 bidders) & $\leq 1.89$ & rTruth \cite{Mehta22} & $\leq 1.8$ & rFPA (\Theorem{rFPA_PoA_Eq}) \\ [1ex] 
 Randomized ($n\rightarrow \infty$ bidders) & $\geq 2$ & \cite[Theorem 5]{Mehta22} & $\geq 2$ & \Theorem{RandPoA} \\ [1ex] 
\bottomrule
\end{tabular}
\TableName{results}
\end{table}

\subsection{Related work}
Prior work on auction design auto-bidding has followed two main threads.
The first thread, which is closest to our work,
assumes no prior information on the advertisers and focuses on auctions that are truthful.
\citet{AggarwalBM19} were the first to study this setting and presented a mathematical formulation for a general auto-bidding product.
As discussed above, they show that uniform bidding is an optimal bidding strategy and, with this, showed that SPA has a PoA of $2$.
Subsequently, \citet{Mehta22} showed that there is a randomized, truthful auction that has a PoA of approximately $1.89$ when there are two bidders and complemented this result by showing that any truthful auction (randomized or otherwise) has a PoA of $2$ as the number of bidders tend to infinity.

\citet{DengMMZ21} initiated a second thread of work in this setting by introducing a model with extra information in order to beat the PoA bound of $2$.
In their model, the auction has access to (a possibly noisy signal of) the individual values of each bidder for each query. They showed that adding a boost to the bid equal to $c$ times the value and then running a SPA leads to a PoA of $\tfrac{c+2}{c+1}$. In a subsequent work, \citet{BalseiroDMMZ21} showed that adding a reserve threshold of $\gamma$ times their values leads to a PoA of $\frac{1}{2-\gamma}$.
We also mention that \citet{BalseiroDMMZ21} look at non-uniform bidding in non-truthful auctions in this model.
Of particular note, they show that setting a reserve of $\gamma \leq 1$ times the advertiser's value yields a PoA of $1/\gamma$ for FPA and GSP when the auto-bidding agents are not restricted to uniform bidding.
In comparison, for FPA without reserves (corresponding to $\gamma = 0$) we show that the PoA for FPA is $2$.
While this model is interesting in its own right, we focus our attention to the pure auction model (as in ~\cite{AggarwalBM19,Mehta22}) in which 
the auction can only work with the bids from the auto-bidders and does not have access to any other information such as the actual values. The pure auction model has practical motivation since the auto-bidder (which converts the values into auction bids) is often a separated system or even a completely different third-party or parties and hence value information may not be available. 

Besides these two main threads, there has also been recent work on understanding the optimal mechanism design for auto-bidders in Bayesian settings~\cite{GolrezaiLP21,BalseiroDMMZ21,BalseiroDMMZ22}; in comparison our work only considers the system response for a given target (as in~\cite{AggarwalBM19,DengMMZ21,Mehta22}).
In addition, there are several lines of work for bidding in the budgeted model, including in non-truthful auctions e.g.,~\cite{BorgsCIJEM07,FeldmanMPS07,BalseiroG19,ConitzerKSM18,ConitzerKPSSMW19}.
\section{Preliminaries}\label{sec:prel}
{\bf Auction basics.}
We let $A$ denote the set of advertisers and $Q$ denote the set of single-slot queries.
Each query $j \in Q$ is sold through an auction mechanism $\cM$, which is common across all queries.
The auction rules are defined by an allocation function $\pi \colon \bR^A_+ \to [0, 1]^A$ where $\sum_{i \in A} \pi_i(b) \leq 1$ for all $b \in \bR^A_+$ and cost function $c \colon \bR^A_+ \to \bR^A_{+}$.
Given bids $b \in \bR^A_+$ where $b_i$ is the bid of advertiser $i$, advertiser $i$ wins the slot (or query) with probability $\pi_i(b)$ for an expected cost $c_i(b)$ (i.e.~advertiser $i$ pays $c_i(b) / \pi_i(b)$ when it wins the query).

An auction is {\em deterministic} if $\pi(b) \in \{0,1\}^A$ whenever there are no ``ties'', i.e.~if $\pi_i(b)>0$ and $\pi_j(b) > 0$ then $b_i = b_j$.
When we allow $\pi(b) \in [0, 1]^A$, we refer to such auctions as {\em randomized} auctions.
An auction is {\em truthful} if and only if $\pi_i(b)$ is monotone in $b_i$ and the cost is given by the formula $c_i(b) = b_i \pi_i(b) - \int_0^{b_i} \pi_i(z, b_{-i}) \, \dd z$ \citep{Myerson81}.\footnote{
    In the quasilinear utility setting, an equivalent definition for an auction to be truthful is that it is optimal for each advertiser to bid their (private) value.
}

\vspace{1em}\noindent\textbf{Auto-bidding basics.}
We consider the setting where an auto-bidding agent submits bids on behalf of advertiser $i \in A$ to maximize the advertiser's value subject to a prespecified target return-on-spend (ROS), $T_i$.
Mathematically, the auto-bidding agent for advertiser $i$ aims to solve the following optimization problem:
\begin{equation}
\EquationName{tros}
\begin{aligned}
    \text{maximize: \quad}   & \sum_{j \in Q} \pi_{i, j} v_{i, j} \\ 
    \text{subject to: \quad} & \sum_{j \in Q} \pi_{i, j} c_{i, j}(\pi_{i, j}) \leq T_i \sum_{j \in Q} \pi_{i, j} v_{i, j} \\
                            & 0 \leq \pi_{i, j} \leq 1 \quad \forall j \in Q,
\end{aligned}
\end{equation}
where $\pi_{i, j}$ is the decision variable that gives the probability that advertiser $i$ wins query $j$,
$c_{i,j}(\pi_{i,j})$ is the cost to advertiser $i$ if the advertiser wins query $j$ with probability $\pi_{i, j}$ (which is determined by the auction rules),
$T_i$ is the advertiser's target ROS,
and $v_{i, j}$ is the value that advertiser $i$ has for query $j$.
Note that the bid that advertiser $i$ places on query $j$ is implicitly determined by $\pi_{i, j}$ and together determine $c_{i, j}$.

In our auto-bidding setting, the goal is maximize a notion of welfare known as the \emph{liquid welfare} which captures an advertiser's willingness to pay for a certain allocation.
This notion was introduced by \citet{DobzinskiL14} and has been widely adopted in the auto-bidding literature \cite{Mehta22,AggarwalBM19,DengMMZ21, BalseiroDMMZ21}.
\begin{definition}[Liquid welfare]
    For an allocation $\{\pi_{i, j}\}_{i \in A, j \in Q}$, the liquid welfare is defined to be the quantity $\LW(\{\pi_{i,j}\}) = \sum_{i \in A} T_i \sum_{j \in Q} \pi_{i, j} v_{i,j}$.
\end{definition}
Notice that for TROS bidders, the Liquid Welfare corresponds to the total target-weighted value obtained in a given allocation.

Following \citet{Mehta22}, we consider the following solution concept for equilibrium.
\begin{definition}[$\gamma$-equilibrium]
    Let $\gamma \geq 0$ and fix an auction $\cA$ with advertisers $A$ and queries $Q$.
    Let $\{b_{i,j}\}$ be the set of bids and let $\{\pi_{i, j}\}, \{c_{i,j}\}_{i, j}$ be the resulting allocations and costs.
    We say that the bids $\{b_{i, j}\}$ are in $\gamma$-equilibrium if each bidder satisfies its ROS constraint, i.e.~$\sum_{j \in Q} \pi_{i, j} c_{i, j} \leq T_i \sum_{j \in Q} \pi_{i, j} v_{i, j}$
    and the following statement holds.
    Suppose that advertiser $i$ deviates and changes its bids from $(b_{i,j})_{j \in Q}$ to $(b_{i,j}')_{j \in Q}$
    while all other advertisers retain their bids.
    Let $\{\pi_{i, j}'\}$ and $\{c_{i, j}'\}$ be the new allocations and costs.
    Then either
    \begin{enumerate}
        \item $\sum_{j \in Q} \pi_{i, j}' v_{i, j} \leq (1+\gamma)\sum_{j \in Q} \pi_{i, j} v_{i, j}$ (advertiser $i$ gains no more than $(1+\gamma)$ multiplicative value); or
        \item $\sum_{j \in Q} \pi_{i, j}' c_{i, j}' > T_i \sum_{j \in Q} \pi_{i, j}' v_{i, j}$ (advertiser $i$ exceeds its ROS constraint).
    \end{enumerate}
    We refer to $0$-equilibrium as simply equilibrium.
\end{definition}

\begin{definition}[(Liquid) Price of Anarchy]
    Fix an auction $\cA$.
    For an instance $\cI$, which denotes a set of advertiser, queries, and their values, let $\pi^{\OPT}$ denote an allocation that maximizes the liquid welfare and let $\Pi^{\EQ}$ denote the set of allocations that form an equilibrium.
    Then the PoA of the auction $\cA$ is defined as $\sup_{\cI} \sup_{\pi^{\EQ} \in \Pi^{\EQ}} \frac{\LW(\pi^{\OPT})}{\LW(\pi^{\EQ})}$.
\end{definition}
The following observation allows us to reduce some notation in this paper.
\begin{remark}
    Without loss of generality, we assume $T_i = 1$ for all advertisers $i$ in the remainder of the proof.
    This is valid since one can replace $v_{i, j}$ with $T_i v_{i, j}$ for all advertisers $i$ and queries $j$.
\end{remark}
\section{Lower bound on PoA for deterministic auctions}
\SectionName{lb_deterministic}
In this section, we prove that the worst-case PoA for any deterministic auction is at least $2$,
even for two bidders, provided that the auction satisfies a couple of natural assumptions.\footnote{
This clearly implies the same PoA for $n$ advertisers either by duplicating the advertisers or introducing ``dummy'' advertisers with zero values.}
\begin{assumption}
\AssumptionName{det}
We assume that the auction $\cA$ satisfies the following assumptions.
\begin{enumerate}
\item \textbf{Allocation.} 
(i) The allocation function $\pi_1(b_1, b_2)$ (resp.~$\pi_2(b_1, b_2)$) is non-decreasing in $b_1$ (resp.~$b_2$), and (ii) the winner is always the highest bidder, i.e.~if $\pi_1(b_1, b_2) = 1$ then $b_1 > b_2$ (and analogously when $\pi_2(b_1, b_2) = 1$).
\item \textbf{Pricing.} 
Let $\price(b_1,b_2)$ be the price that the winner pays. We assume that (i) $\price(b_1, b_2)$ is non-decreasing in $b_1, b_2$, and (ii) satisfies that $\lim_{b \to \infty} \price(b, b) = \infty$.
\end{enumerate}
\end{assumption}

\begin{remark}
    The assumption $\lim_{b \to \infty} \price(b, b) = \infty$ ensures that the prices are not capped.
    Such an assumption is \emph{necessary} to guarantee that an
    equilibrium exists in the auction.\footnote{Consider an auction such that $\lim_{b \to \infty} \price(b, b) \leq M$ for some $M \geq 0$ (e.g., a third-price auction with two bidders). Observe that
    for all $b_1 \geq b_2$, $\price(b_1, b_2) \leq \price(b_1, b_1) \leq M$ (and similarly for $\price(b_2, b_1)$). 
    Thus, if there are two bidders with value strictly more than $M$ then no
    equilibrium exists since the two bidders can always outbid each other and pay at most $M$.}
\end{remark}

\begin{theorem}\label{th:1}
\TheoremName{poa_fpa_lb}
\TheoremName{poa_det_truthful_lb}
No auction that satisfies \Assumption{det} can have a PoA better than $2$ even for two bidders.
\end{theorem}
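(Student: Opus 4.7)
The plan is to build, for each integer $n \ge 1$, a two-bidder instance $\cI_n$ together with bids that form an equilibrium whose liquid welfare is $n+1$, while the welfare-maximizing allocation has value that can be pushed close to $2(n+1)$. The instance $\cI_n$ has $n+1$ queries indexed $0, 1, \ldots, n$: bidder $1$ has value $V_n$ on query $0$ and value $0$ on each other query, while bidder $2$ has value $1$ on every query. The welfare-maximizing allocation assigns query $0$ to bidder $1$ and queries $1, \ldots, n$ to bidder $2$, yielding $\LW(\pi^{\OPT}) = V_n + n$.

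For the candidate equilibrium, bidder $1$ bids $0$ on every query, while bidder $2$ bids some $b_n$ on query $0$ and an infinitesimally small $\varepsilon > 0$ on every other query. By \Assumption{det}, bidder $2$ wins every query, so $\LW(\pi^{\EQ}) = n+1$. To verify that these bids form an equilibrium, I would check two things. First, bidder $2$'s ROS: the total cost is $\price(b_n, 0) + n \cdot \price(\varepsilon, 0)$, which tends to $\price(b_n, 0)$ as $\varepsilon \to 0$, so the constraint reduces to $\price(b_n, 0) \le n + 1$. Second, bidder $1$ has no profitable deviation: the only way to gain value is to win query $0$ by bidding some $b_1' > b_n$; by monotonicity of the pricing function, any such deviation costs at least $\price(b_n, b_n)$, so choosing $V_n < \price(b_n, b_n)$ guarantees every such deviation violates ROS.

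With those checks in place, the PoA is at least $(V_n + n)/(n+1)$, and the remaining task is to maximize $V_n$ subject to $V_n < \price(b_n, b_n)$ and $\price(b_n, 0) \le n+1$. Since $\price(b_n, b_n) \ge \price(b_n, 0)$ by monotonicity and $\price(b_n, b_n) \to \infty$ by the no-cap condition, I can pick $b_n$ so that $\price(b_n, 0)$ is arbitrarily close to $n+1$ from below. This allows $V_n$ to be taken essentially at $n+1$, yielding
\[
    \frac{\LW(\pi^{\OPT})}{\LW(\pi^{\EQ})} \ge \frac{(n+1) + n - o(1)}{n+1} = 2 - \frac{1 + o(1)}{n+1},
\]
which tends to $2$ as $n \to \infty$. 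In the degenerate case that $\price(\cdot, 0)$ is bounded (as in SPA, where $\price(b,0) = 0$), bidder $2$'s ROS constraint is automatically slack, so $V_n$ may be taken arbitrarily large via the no-cap condition alone, giving PoA $\to \infty$.

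The main obstacle is the coupled tension in the choice of $b_n$: it must be large enough that $\price(b_n, b_n) > V_n$ (so that bidder $1$ cannot profitably outbid $b_n$ on query $0$) yet moderate enough that $\price(b_n, 0) \le n+1$ (so that bidder $2$'s own ROS is preserved). The one-sided monotonicity inequality $\price(b, b) \ge \price(b, 0)$, combined with the no-cap condition $\lim_{b\to\infty}\price(b,b) = \infty$, is precisely what creates enough slack between ``what bidder $2$ pays'' and ``the barrier bidder $1$ must cross'' for $V_n$ to be pushed up to roughly $n+1$, driving the ratio to $2$.
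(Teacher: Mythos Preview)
There is a genuine gap in the ROS verification for bidder~$2$. You assume that bidder~$2$ can win queries $1,\ldots,n$ at vanishing cost by bidding $\varepsilon$ and letting $n\cdot\price(\varepsilon,0)\to 0$. \Assumption{det} does not give you this. First, the assumption only says that the winner (if any) is the highest bidder, not that the highest bidder always wins; so a bid of $\varepsilon$ against $0$ need not be allocated at all, and bidder~$2$ may have to bid at least some fixed $B_1>0$ to win. Second, nothing in \Assumption{det} forces $\price(B_1,0)$ to be small. Consider a first-price auction with reserve $r\in(0,1)$, which satisfies \Assumption{det}: the minimal winning bid against $0$ is $r$ and $\price(r,0)=r$. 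Bidder~$2$ then pays at least $nr$ on queries $1,\ldots,n$, so its ROS forces $b_n\le (n+1)-nr$; since $\price(b_n,b_n)=b_n$ here, you get $V_n<b_n\le (n+1)-nr$, and the ratio $(V_n+n)/(n+1)$ is at most $(2n+1-nr)/(n+1)\to 2-r<2$. Your ``degenerate case'' clause (bounded $\price(\cdot,0)$) does not cover this, and the final paragraph's appeal to the slack $\price(b,b)\ge\price(b,0)$ does nothing about the fixed per-query cost on the $n$ side queries.

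The paper sidesteps this entirely by using only two queries and letting the overbidding advertiser's value scale with the construction. Advertiser~$1$ is given value $\price(B_2,B_2)+\price(B_1,0)$ on a single ``subsidy'' query, which exactly absorbs both the cost $\price(B_1,0)$ of winning that query and the cost $\price(B_2,0)\le\price(B_2,B_2)$ of overbidding on the contested query. Sending $B_2\to\infty$ makes $\price(B_1,0)$ negligible in the ratio regardless of its size, giving PoA $\to 2$. Your construction can be repaired along similar lines by scaling bidder~$2$'s per-query value with $n$ (so that the fixed cost $\price(B_1,0)$ becomes relatively small), but as written, with values pinned at $1$, the argument does not reach $2$ for all auctions satisfying \Assumption{det}.
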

To complement \Theorem{poa_fpa_lb}, we show that a first price auction has a price of anarchy of at most $2$ under non-uniform bidding but we relegate the proof to \Appendix{poa_fpa}.
\begin{theorem}
    \TheoremName{poa_fpa}
    In a first price auction, the liquid welfare of any equilibrium is at least half of the optimal welfare.
\end{theorem}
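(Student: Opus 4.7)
The plan is the standard single-player deviation argument for price of anarchy, adapted to FPA with auto-bidders. Fix an equilibrium with allocation $\pi^{\EQ}$ and let $\pi^{\OPT}$ be an optimal welfare-maximizing allocation. By the remark at the end of \Section{prel}, we may assume $T_i = 1$ for all $i$. Since the liquid welfare is additive in $v_{i,j}$, we may take $\pi^{\OPT}$ to be integral: define $O_i = \{j : \pi^{\OPT}_{i,j} = 1\}$. In FPA, for each query $j$, let $p_j$ denote the winning bid in equilibrium, which equals the price paid by the winner.

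The key step is to consider, for each advertiser $i$, the deviation that bids $p_j + \eps$ on each $j \in O_i$ (for arbitrarily small $\eps > 0$) and bids $0$ on every other query. Under this deviation, $i$ wins exactly the queries in $O_i$, with total value $\sum_{j \in O_i} v_{i,j}$ and total cost $\sum_{j \in O_i} (p_j + \eps)$. The equilibrium condition guarantees that either (a) the deviation's value is at most the equilibrium value $v^{\EQ}_i := \sum_j \pi^{\EQ}_{i,j} v_{i,j}$, yielding $\sum_{j \in O_i} v_{i,j} \leq v^{\EQ}_i$, or (b) the deviation violates the ROS constraint, which on taking $\eps \to 0$ yields $\sum_{j \in O_i} v_{i,j} \leq \sum_{j \in O_i} p_j$. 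Either way we obtain the key inequality
\[
    \sum_{j \in O_i} v_{i,j} \;\leq\; v^{\EQ}_i + \sum_{j \in O_i} p_j.
\]

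Summing this over $i$ and using that the sets $O_i$ are pairwise disjoint gives $\LW(\pi^{\OPT}) \leq \LW(\pi^{\EQ}) + \sum_j p_j$. To conclude, we sum the equilibrium ROS constraints $\sum_j \pi^{\EQ}_{i,j} p_j \leq \sum_j \pi^{\EQ}_{i,j} v_{i,j}$ over all $i$ to obtain $\sum_j p_j \sum_i \pi^{\EQ}_{i,j} \leq \LW(\pi^{\EQ})$, and then argue that every query with positive optimal value is fully allocated in equilibrium (otherwise a bidder with positive value could profitably deviate by bidding a small positive amount). This yields $\sum_j p_j \leq \LW(\pi^{\EQ})$, hence $\LW(\pi^{\OPT}) \leq 2 \LW(\pi^{\EQ})$, as desired.

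The main subtlety will be justifying that every positively-valued query is fully allocated in equilibrium, which requires care in handling ties and the resulting fractional allocations in FPA. A secondary technicality is the passage to the limit $\eps \to 0$ in case (b), which is handled by noting that if case (a) fails for every sufficiently small $\eps$, then case (b) must hold along a sequence $\eps_n \to 0$ and one obtains the claimed inequality in the limit.
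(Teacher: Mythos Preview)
Your proof is correct and takes a genuinely different route than the paper's. The paper argues per \emph{query}: it partitions the queries into $S_1$ (where the equilibrium winner is the highest-value bidder) and $S_2$ (where it is not), observes directly that $\LW(\pi^{\EQ}) \geq \sum_{j \in S_1} v_j^*$, and then, for $j \in S_2$, uses a \emph{single-query} deviation---the highest-value bidder could bid $v_j^*$ on $j$ alone, win, and still satisfy ROS---to conclude $p_j \geq v_j^*$, so that the ROS bound $\LW(\pi^{\EQ}) \geq \sum_j p_j \geq \sum_{j \in S_2} v_j^*$ finishes the argument. Your approach is instead a per-\emph{advertiser} smoothness-style argument: a multi-query deviation to all of $O_i$ yields the summable inequality $\sum_{j\in O_i} v_{i,j} \leq v^{\EQ}_i + \sum_{j\in O_i} p_j$, after which the same ROS-implies-spend-is-bounded step closes the loop. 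The paper's route is a bit shorter and, because it only ever uses single-query deviations, actually proves the stronger statement for \emph{undominated} bids (cf.~the set of undominated bids used in \Section{rfpa}); your route trades this for the familiar smoothness template, which generalizes more readily.

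One minor simplification to your last step: the ``main subtlety'' you flag is not actually needed. What you require is not that every positively-valued query is fully allocated in equilibrium, but merely that the total spend equals $\sum_j p_j$. In FPA the total spend on query $j$ is $p_j \sum_i \pi^{\EQ}_{i,j}$, and this equals $p_j$ automatically: either $p_j > 0$ and the highest bidder(s) are allocated the full item, or $p_j = 0$ and the term vanishes regardless of how the query is allocated. So the separate deviation argument for full allocation can be dropped.
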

\begin{proof}[Proof of \Theorem{poa_det_truthful_lb}]
Observe that we can focus on auctions where there is a bid $B_1$ such that if one of the bidders bids $B_1$ (say bidder $1$) and the other bidder bids $0$ then bidder $1$ wins the item (and pays $\price(B_1, 0)$). If such $B_1$ does not exist (i.e.~$\pi_1(b, 0) = 0$ for all $b$), clearly the PoA is infinity (consider the instance with a single query, bidder $1$ has value $1$ and bidder $2$ has value $0$).
    
We now construct an instance as follows. Fix $\gamma \geq 0$ and $\eps < \gamma$.
Let $B_1$ be as above and let $B_2$ be arbitrary which we will take to infinity later.
Consider two queries:
\begin{enumerate}
    \item In query $1$, advertiser $1$ has value $\price(B_2, B_2) + \price(B_1, 0)$; advertiser $2$ has value $0$.
    \item In query $2$, advertiser $1$ has value $\eps \cdot \price(B_1, 0)$; advertiser $2$ has value $\price(B_2, B_2) - \eps$.
\end{enumerate}
We claim that the following set of bids is a $\gamma$-equilibrium: advertiser $1$ bids $B_1$ on query $1$ and $B_2$ on query $2$, while advertiser $2$ bids $0$ on both queries.
    
First, note that advertiser $1$ meets its ROS constraint and it receives value at least $\price(B_2, B_2) + \price(B_1, 0)$
(from the first item alone) and pays at most $\price(B_2, 0) + \price(B_1, 0) \leq \price(B_2, B_2) + \price(B_1, 0)$
(the upper bound follows from the assumption that $\price(b_1, b_2)$ is non-decreasing in $b_1, b_2$.
In addition, bidder $1$ cannot increase its value by more than $\gamma$ since it has value $\eps \cdot \price(B_1, 0) < \gamma (\price(B_2, B_2) + \price(B_1, 0))$ for query $2$.

Next, advertiser $2$ can only win query $2$ if it bids at least $B_2$ but then the price would be strictly more than its value.
So advertiser $2$ also has no incentive to deviate.

The liquid welfare of this equilibrium is at most $\price(B_2, B_2) + \price(B_1, 0) + \eps \cdot \price(B_1, 0)$
whereas the optimal allocation has liquid welfare equal to $2\cdot\price(B_2, B_2) + \price(B_1, 0) - \eps$.
Since $B_2, \eps$ is arbitrary, we can take $B_2 \to \infty$ (in which case, $\price(B_2, B_2) \to \infty$) and $\eps \to 0$ so that the ratio approaches $2$.
\end{proof}

\section{Lower bound on PoA for randomized auctions}
\SectionName{lb_randomized}
In this section, we show that the PoA of any randomized auction is at least $2$ provided that the auction satisfies some natural assumptions.

\begin{assumption}
\label{assumption}
Suppose there are $n$ bidders.
We assume an auction $\cA$ satisfying the following properties.
\begin{enumerate}
    \item \emph{\textbf{Allocation.}}
    For $i \in [n]$, let $\alloc_i(b_1, \ldots, b_n) \in [0,1]$ be the probability that bidder $i$ wins the item.
    We assume that $\alloc_i(b_1, \ldots, b_n)$ is non-decreasing in $b_i$ and that $\sum_{i \in [n]} \alloc_i(b_1, \ldots, b_n) \leq 1$.
    \item \emph{\textbf{Pricing.}}
    For $i \in [n]$, let $\price_i(b_1, \ldots, b_n) \in \bR_{\geq 0}$ be the price that bidder $i$ pays if bidder $i$ wins the item (and otherwise, pays $0$).
    We assume that $\price_i(b_1, \ldots, b_n)$ is non-decreasing in $b_j$ for all $j \in [n]$.
    Further, we assume that $\lim_{b \to \infty} \price_i(b\cdot \bme_S) = +\infty$ whenever $i \in S$ and $|S| \geq 2$.
    \item \emph{\textbf{Anonymity.}}
    Let $b \in \bR^{n}_{\geq 0}$.
    For $i, j \in [n]$, let $b^{i \leftrightarrow j}$ be the vector such that $b^{i \leftrightarrow j}_i = b_j$ and $b^{i \leftrightarrow j}_j = b_i$.
    Then we assume that $\alloc_i(b) = \alloc_j(b^{i \leftrightarrow j})$ and $\price_i(b) = \price_j(b^{i \leftrightarrow j})$.
\end{enumerate}
\end{assumption}
The first assumption means that an advertiser will have a higher chance of winning as their bid increases.
The second assumption means that as the bids increase, so do the prices and, if multiple bids become very large then so do the prices.
The third assumption means that the auction should be anonymous: if we swap the bids of two advertisers then the outcome should also be swapped.

The main result in this section is the following theorem which states that it is not possible to obtain a PoA better than $2$ for any auction.
The proof can be found in \Appendix{RandPoA} and is an extension of the techniques from \Section{lb_deterministic}.
\begin{theorem}
    \TheoremName{RandPoA}
    For every $\gamma, \delta > 0$, $k \geq 1$, and auction $\cA$, there exists an instance with $2k$ bidders
    in which a $\gamma$-equilibrium exists whose total value is less than $\frac{k+2}{2(k+1)} \cdot (1+\delta)$ fraction of the optimal value.
\end{theorem}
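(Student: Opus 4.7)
The plan is to generalize the construction from the proof of \Theorem{poa_det_truthful_lb} from $2$ bidders to $2k$ bidders, exploiting the anonymity assumption (new to this setting) to create a symmetric equilibrium in which a group of $k$ ``active'' bidders collectively monopolizes a single ``blocking'' query. The instance will have $k+1$ queries: $k$ ``cheap'' queries and one ``expensive'' query, with the inefficiency localized at the expensive query.

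First I would partition the $2k$ bidders into an active set $A=\{1,\dots,k\}$ and a passive set $P=\{k+1,\dots,2k\}$, pick bid levels $B_1$ and $B_2$, and define $p := \price_i(B_1 \bme_{\{i\}})$ and $V := \price_i(B_2 \bme_A)$. By the anonymity assumption, $V$ is well-defined (independent of $i \in A$), and for $k \geq 2$ it can be driven to $+\infty$ by taking $B_2 \to \infty$ via the assumption $\lim_{b\to\infty} \price_i(b\bme_S) = +\infty$ for $|S| \geq 2$. On cheap query $i$, only active bidder $i$ has positive value, equal to $U := p + V/k$. On the expensive query $k+1$, active bidders have a tiny value $\eta$, and passive bidders have value $V-\eps$ (slightly below $V$).

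Next I would exhibit the candidate $\gamma$-equilibrium: each active bidder $i$ bids $B_1$ on cheap query $i$ and $B_2$ on the expensive query, while every passive bidder bids $0$. By anonymity, each active bidder wins the expensive query with probability $1/k$, so the ROS constraint $p + V/k \leq U + \eta/k$ holds with equality-up-to-$\eta$ by the choice of $U$. To rule out deviations: for a passive bidder $j$, any bid $b$ yielding positive winning probability pays a price lower bounded (via monotonicity of $\price_j$ in all coordinates) by $\price_j(B_2\bme_{A\cup\{j\}})$, which by the $|A\cup\{j\}|=k+1\geq 2$ case of the $\lim\price=\infty$ assumption exceeds $V-\eps$, so the deviation violates ROS; small-bid deviations where monotonicity alone does not force the price up would be absorbed by the $\gamma$-slack combined with an appropriate margin in the passive-bidder value. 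Active-bidder deviations are bounded analogously to the deterministic proof.

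A direct computation then gives equilibrium welfare $kU + \eta = kp + V + \eta$ (all $k$ cheap queries plus the expensive one won by an active bidder with value $\eta$) and optimum welfare $kU + (V-\eps) = kp + 2V - \eps$ (reassigning the expensive query to a passive bidder), so after $\eta,\eps \to 0$ the ratio tends to $(kp+V)/(kp+2V)$. Calibrating $p = 2V/k^2$, which is achievable by choosing $B_1$ appropriately relative to $B_2$ using the monotonicity of $\price$, makes this ratio exactly $\frac{k+2}{2(k+1)}$, and the $(1+\delta)$ factor absorbs finite-parameter errors. The main obstacle will be rigorously ruling out small-bid deviations by passive bidders, where the $\lim\price = \infty$ assumption does not directly apply: such deviations can win with positive but small probability at an a~priori unconstrained price, and handling them requires combining the monotonicity of $\price_j$ in $b_j$ with the $\gamma$-slack and a carefully chosen margin between the passive-bidder value and $V$.
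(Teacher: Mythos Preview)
Your proposal has a genuine gap, and it is exactly the one you flag as the ``main obstacle.'' In your equilibrium every passive bidder has value $0$, so the $\gamma$-slack buys nothing: any deviation yielding positive value while meeting ROS destroys the $\gamma$-equilibrium regardless of $\gamma$. Such deviations exist under Assumption~\ref{assumption}. For a concrete witness, take the proportional-share first-price auction ($\pi_i(b)=b_i/\sum_\ell b_\ell$, winner pays its bid). A passive bidder who bids a tiny $b>0$ against the $k$ active bids of $B_2$ wins with probability $b/(kB_2+b)>0$ and pays only $b\ll V-\eps$, so ROS holds and the bidder strictly gains. Monotonicity of $\price_j$ in $b_j$ gives only $\price_j(B_2\bme_A+b\bme_j)\ge\price_j(B_2\bme_A)$, and the right side is the price at a zero bid, which can be $0$; there is no way to force the price above $V-\eps$ for small $b$.

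The paper's construction avoids this entirely by \emph{not} requiring passive bidders to receive zero value. It uses $2k$ queries (one blocking query per passive bidder), calibrates the passive bidder's value to just below $\price_1(B_2\,\bme_{[k+1]})$ (the price when $k{+}1$ bidders are present), and lets the passive bidder play any best response. The only deviation to rule out is bidding \emph{above} $B_2$, which is handled by price monotonicity. Any bid at most $B_2$ wins with probability at most $1/(k+1)$, so the passive bidder's contribution to the equilibrium welfare is at most a $1/(k+1)$ fraction of its value; summing these contributions is precisely what turns the naive bound $\tfrac12$ into $\frac{k+2}{2(k+1)}$.

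There is a second problem: your ROS check for active bidders implicitly assumes the lone bidder on a cheap query wins with probability $1$. In a randomized auction the win probability is only $\pi_1(B_1\bme_1)\le\pi^*$, and with $U=p+V/k$ the ROS inequality reduces to $V(1-\pi_1(B_1\bme_1))\le\eta$, which fails whenever $\pi^*<1$ and $\eta$ is small. The paper handles this by scaling the cheap-query value by $1/\pi_1(B_1\bme_1)$ (this is the role of $M(\cdot)$ in its construction) and by first disposing of the case $\pi^*<1/2$ via a one-query instance. Relatedly, your calibration $p=2V/k^2$ need not be achievable: the assumption $\lim_{b\to\infty}\price_i(b\,\bme_S)=\infty$ is imposed only for $|S|\ge 2$, so $\price_i(B_1\bme_{\{i\}})$ may be identically $0$ (as in SPA), and for $k=1$ your $V=\price_i(B_2\bme_A)$ with $|A|=1$ need not diverge either.
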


\section{Randomized FPA}
\SectionName{rfpa}

In this section, we show that a randomized version of the first price auction can be used to obtain better equilibrium outcomes in auto-bidding for two bidders.
Note that, in the following definition, $\alpha = 1$ reduces to the usual first price auction.
\begin{definition}[$\rFPA(\alpha)$]
\DefinitionName{rfpa}
We define the allocation of $\rFPA(\alpha)$ as follows.
\begin{itemize}
    \item If $b_1 = \beta b_2$ for $\beta \in [\frac{1}{\alpha}, \alpha]$ then allocate to advertiser $1$ with probability $\frac{1}{2} \left( 1 + \ln_\alpha \beta \right)$ and advertiser $2$ with probability $\frac{1}{2}\left( 1 - \ln_\alpha \beta \right)$.
    \item Otherwise, allocate to the higher bidder with probability $1$.
\end{itemize}
In both cases, the winner is charged their bid.
\end{definition}

\begin{definition}[Undominated bids]
Let $\mathbf{b}_i = (b_{i,1}, \ldots, b_{i,m})$ be the bids of advertiser $i$ for queries $1, \ldots, m$.
We say that $\mathbf{b}_i$ is \emph{undominated} if advertiser $i$'s ROS constraint are satisfied and
if for all $j$, changing the bid for $j$ from $b_{i, j}$ to $b_{i, j}'$
then either
(i) advertiser $i$'s value from query $j$ does not increase or
(ii) advertiser $i$'s ROS constraint is violated.
For a collection of bids $\mathbf{b} = (\mathbf{b}_1, \ldots, \mathbf{b}_n)$,
we say that $\mathbf{b}$ is undominated if $\mathbf{b}_i$ is undominated for all $i \in [n]$.
\end{definition}
\noindent The set of undominated bids contains the set of equilibrium bids as a subset.
However, the set of undominated bids may be strictly larger since it only guarantees that the advertiser will not deviate on a \emph{single} query.
For example, an advertiser may decrease their bid on one query while increasing their bid on another query to maintain their ROS constraint
and increase their total value.

Our main result in this section is the following theorem.
\begin{theorem}
    \TheoremName{rFPA_PoA}
    For any set of undominated bids for two bidders, $\rFPA(\alpha=1.4)$ obtains at least $1/1.8$-fraction of the optimal welfare.
\end{theorem}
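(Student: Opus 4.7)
The plan is to bound $\LW^{\OPT}/\LW^{\EQ} \le 1.8$ at $\alpha = 1.4$ by deriving per-query inequalities from the undominated-bids hypothesis and then aggregating across queries using each bidder's aggregate ROS constraint. Normalize $T_i = 1$, let $\pi_{i,j}$ be the equilibrium allocation and $\pi^{*}_{i,j}$ the OPT allocation, and for each query $j$ set $\beta_j = b_{1,j}/b_{2,j}$; the query is in the \emph{randomized} regime when $\beta_j \in [1/\alpha, \alpha]$ and in one of two \emph{deterministic} regimes otherwise. I would partition queries by (a) the identity of the OPT winner and (b) the equilibrium regime, giving a small finite collection of categories.

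The first substantive step is to translate the undominated condition into local per-query inequalities. Fix a query $j$ and consider bidder $i$ deviating to $b' > b_{i,j}$ with new local allocation $\pi'$. Whenever $\pi' > \pi_{i,j}$ the deviation strictly increases bidder $i$'s value on query $j$, so undominatedness forces the new aggregate ROS to fail. Subtracting the old (satisfied) ROS from the new (violated) ROS and dropping the common nonnegative slack coming from the other queries yields $v_{i,j}(\pi' - \pi_{i,j}) \le \pi' b' - \pi_{i,j} b_{i,j}$, i.e.\ $v_{i,j}$ is bounded by the marginal cost of the deviation. Two choices of deviation are especially useful: (i) the jump to $b' = \alpha b_{-i,j}$ with $\pi' = 1$, which gives $(1-\pi_{i,j})v_{i,j} + \pi_{i,j} b_{i,j} \le \alpha b_{-i,j}$; and (ii) an infinitesimal deviation inside the randomized regime, which gives the marginal bound $v_{1,j} \le b_{1,j}(1 + \ln(\alpha \beta_j))$ for bidder~1 and $v_{2,j} \le b_{2,j}(1 + \ln(\alpha/\beta_j))$ for bidder~2, obtained from $\pi_{i,j}/\partial_b \pi_{i,j}(b_{i,j}) = b_{i,j}\ln(\alpha\beta_j)$ for the logarithmic $\rFPA$ allocation rule.

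With these inequalities in hand, I would expand $\LW^{\OPT} = \sum_j v_{\text{OPT-winner},j}$ and, within each category, bound each summand by an appropriate combination of $b_{1,j}, b_{2,j}, v_{1,j}, v_{2,j}$ using the inequality best suited to that regime (the jump bound on the ``outright loss'' categories, the marginal bound on the randomized-regime categories). Summing across queries and applying each bidder's aggregate ROS $\sum_j \pi_{i,j} b_{i,j} \le \sum_j \pi_{i,j} v_{i,j}$ converts the bid sums back into EQ value sums, producing an explicit bound $\LW^{\OPT} \le c(\alpha)\LW^{\EQ}$ in which $c(\alpha)$ is a finite expression in $\alpha$ and $\ln\alpha$. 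Verifying $c(1.4) \le 1.8$ reduces to a one-variable numerical check.

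The main obstacle is that the local undominated inequalities couple the mixing probability $\pi_{i,j}$ to the bid ratio $\beta_j$ in a nonlinear way, so the aggregate step must be done category-by-category with the per-bidder ROS slack tracked across categories; in particular, the bound $v_{i,j} \le b_{i,j}(1+\ln(\alpha\beta_j))$ is tight only when bidder $i$'s ROS is tight on the remaining queries, and so a naive per-query summation is lossy. A clean way to organize the book-keeping is to cast the worst-case PoA as a factor-revealing linear program whose variables are per-bidder per-category aggregate values and costs, whose constraints encode the undominated inequalities and each bidder's ROS, and whose optimal value (as a function of $\alpha$) equals the worst-case PoA. The claim then reduces to showing this LP has value at most $1.8$ at $\alpha = 1.4$, which can be certified either by exhibiting a small extremal two-query instance together with a matching dual, or by a direct case analysis of the six categories.
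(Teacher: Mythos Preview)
Your overall plan---derive per-query inequalities from undominatedness, then aggregate via a convex combination of the ``spend'' and ``matched-value'' lower bounds on $\LW^{\EQ}$, and finish with an LP/duality argument---is exactly the route the paper takes (see \Lemma{rfpa_technical}). Your infinitesimal deviation in the randomized regime is correct and reproduces the paper's \Lemma{defection}: $v_{i,j}\le b_{i,j}\bigl(1+\ln(\alpha\beta_i)\bigr)$.

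The gap is in your treatment of the ``outright loss'' category. Your jump deviation to $b'=\alpha b_{-i,j}$ yields only $v_{i^*,j}\le \alpha\, b_{i^o,j}$, i.e.\ $\Spend_1(j)\ge v^*_j/\alpha$. Plugging this into the convex-combination bound $\LW^{\EQ}\ge \eta\cdot\Spend+\gamma\cdot\text{(matched value)}$ forces, from the $Q_1$ and $Q_3$ categories alone, $\eta/\alpha\ge 1/c$ and $\gamma\ge 1/c$ with $\eta+\gamma=1$; the optimum is $c=1+\alpha=2.4$ at $\alpha=1.4$, so the numerical check cannot reach $1.8$. The paper instead uses the deviation $b'=v_{i,j}$ (bid your own value): on a query where $\pi_{i,j}=0$, this deviation is ROS-neutral on query $j$ (cost $=\pi' v_{i,j}=$ value), so the aggregate ROS remains satisfied. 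Undominatedness then forces $\pi'=0$, hence $v_{i,j}\le b_{-i,j}/\alpha$, i.e.\ $\Spend_1(j)\ge \alpha\, v^*_j$ (this is \Lemma{wp1_win_lb}). That extra factor of $\alpha^2$ in the spend bound is precisely what makes $\eta\alpha\ge 1/1.8$ and $\gamma\ge 1/1.8$ simultaneously feasible (the paper takes $\gamma=0.56$, $\eta=0.44$). Note that this stronger conclusion does \emph{not} fall out of your general inequality $v_{i,j}(\pi'-\pi_{i,j})\le \pi' b'-\pi_{i,j}b_{i,j}$, because with $b'=v_{i,j}$ the new ROS is \emph{not} violated; you must argue the contrapositive directly. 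Once you replace the jump bound by this bid-your-value argument, your factor-revealing LP coincides with the paper's \Equation{f_alpha}--\Equation{g_alpha_beta_gamma_eta}, and the verification at $\alpha=1.4$ goes through.
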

The following theorem is a trivial corollary of \Theorem{rFPA_PoA}.
\begin{theorem}
    \TheoremName{rFPA_PoA_Eq}
    For two bidders, the PoA of $\rFPA(\alpha=1.4)$ is at most $1.8$.
\end{theorem}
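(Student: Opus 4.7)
The plan is to deduce \Theorem{rFPA_PoA_Eq} essentially for free from \Theorem{rFPA_PoA}, by verifying the set-theoretic containment between equilibrium bid profiles and undominated bid profiles. That is, I would show that any bid profile $\mathbf{b}$ that forms a $0$-equilibrium in $\rFPA(\alpha=1.4)$ is in particular undominated in the sense of the preceding definition. Once this is in hand, the worst-case ratio $\LW(\pi^{\OPT})/\LW(\pi^{\EQ})$ over equilibria is bounded above by the worst-case ratio over the larger class of undominated profiles, which \Theorem{rFPA_PoA} guarantees is at most $1.8$.

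To establish equilibrium $\subseteq$ undominated, I would argue by contrapositive: suppose a bid profile $\mathbf{b}$ fails to be undominated. Then there exists some advertiser $i$ and query $j$ such that replacing the single bid $b_{i,j}$ by some $b_{i,j}'$ strictly increases $i$'s value from query $j$ while still keeping $i$'s ROS constraint satisfied. Since bids on all other queries $j' \neq j$ are unchanged, the allocation and cost on those queries are unaffected, and therefore the total value $\sum_{j' \in Q} \pi_{i,j'} v_{i,j'}$ strictly increases. This is exactly a strictly profitable ROS-feasible deviation by bidder $i$, violating the $0$-equilibrium condition. Hence any equilibrium must be undominated, as claimed. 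The same reasoning works for two bidders in $\rFPA(\alpha)$ since the allocation and payment on each query depend only on the bids submitted for that query.

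With the containment established, the corollary is immediate: for any instance $\cI$ and any equilibrium allocation $\pi^{\EQ} \in \Pi^{\EQ}$ the associated bid profile is undominated, so \Theorem{rFPA_PoA} yields $\LW(\pi^{\EQ}) \geq \LW(\pi^{\OPT})/1.8$, and taking the supremum over $\cI$ and over $\pi^{\EQ} \in \Pi^{\EQ}$ gives the PoA bound. There is essentially no obstacle in this step — all of the real work lies in \Theorem{rFPA_PoA} itself, and the corollary reduces to unpacking the definitions. The only thing to be careful about is that the notion of equilibrium used here is the $0$-equilibrium (as stated in the theorem), so that a strict increase in a single query's value is already a violation; for the $\gamma$-equilibrium variant one would instead obtain a bound of $1.8(1+\gamma)$ by the same argument.
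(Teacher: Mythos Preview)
Your proposal is correct and matches the paper's approach exactly: the paper simply states that \Theorem{rFPA_PoA_Eq} is a ``trivial corollary'' of \Theorem{rFPA_PoA}, having already remarked after the definition of undominated bids that the set of undominated bids contains the set of equilibrium bids. Your write-up just spells out this containment explicitly, which the paper leaves implicit.
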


\noindent {\bf High-level proof overview.}
The goal in the proof is to show that, even on queries where there is a misallocation, we can still use these queries to prove a lower bound on the liquid welfare.
Indeed, the main starting observation is that, because of the ROS constraint, the \emph{spend} of the two bidders is a lower bound on the liquid welfare.
Next, we consider two main cases where there is misallocation.

In one extreme case, if rFPA allocates to the wrong bidder with probability $1$ (complete misallocation) then we show, in \Lemma{wp1_win_lb}, then the winner's bid (and thus, their spend) is very high.
This allows us to still extract some contribution to the liquid welfare.
The reason that the winning bid is very high is because if this was not so, the ``right'' bidder would deviate by bidding their own value.

In the other case, when the query is shared between the two bidders, we show that there is a balance between how likely rFPA allocates to the ``wrong'' bidder and how much the bidders are spending in this query.
In fact, the more that rFPA misallocates, the greater is the spend (\Lemma{defection}).

We note that somewhat similar techniques are used in the analysis of the randomized truthful auction studied by \citet{Mehta22}.
However, since their auction is truthful, it can be seen that bidders always bid at their value on all queries.
This is not the case in our setting since bidders can potentially underbid; however as noted above, we can bound the degree of underbidding in the critical cases.

We now present two key lemmata whose proofs are deferred to \Appendix{wp1_win_lb_defection}.

\begin{lemma}
    \LemmaName{wp1_win_lb}
    Fix a query $j$ and suppose that advertiser $1$ wins query $j$ with probability $1$.
    If the bids are undominated then advertiser $1$ bids and pays at least $\alpha v_{2, j}$.
    The same statement holds with advertisers $1$ and $2$ swapped.
\end{lemma}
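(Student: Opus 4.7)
The plan is to prove the contrapositive via a single-query deviation by advertiser $2$. Suppose for contradiction that advertiser $1$ wins query $j$ with probability $1$ but $b_{1,j} < \alpha v_{2,j}$. (If $v_{2,j} = 0$ the claim is vacuous, so assume $v_{2,j} > 0$.) I would hold all other bids fixed and consider advertiser $2$'s alternative bid $b'_{2,j} = v_{2,j}$ on query $j$.

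The first step is to check that this deviation gives advertiser $2$ strictly positive probability of winning query $j$. By the rFPA allocation rule in \Definition{rfpa}, advertiser $1$ wins with probability $1$ precisely when $b_{1,j} \ge \alpha b_{2,j}$ (or $b_{2,j}=0$ with $b_{1,j}>0$). Under the deviation $b_{1,j} < \alpha v_{2,j} = \alpha b'_{2,j}$, so the ratio $b_{1,j}/b'_{2,j}$ either lies strictly below $\alpha$ inside $[1/\alpha,\alpha]$, in which case advertiser $2$'s winning probability $\tfrac{1}{2}\bigl(1 - \log_\alpha(b_{1,j}/b'_{2,j})\bigr)$ is strictly positive, or lies below $1/\alpha$, in which case advertiser $2$ wins outright.

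The second step is to verify that the deviation preserves advertiser $2$'s ROS constraint. Because rFPA charges the winner their own bid, the extra expected spend from the deviation is $\pi'_{2,j} \cdot b'_{2,j} = \pi'_{2,j} v_{2,j}$, while the extra expected value is $\pi'_{2,j} v_{2,j}$ as well. So if $S_0, V_0$ denote the total spend and value before the deviation (satisfying $S_0 \le V_0$ by the undominated assumption), the new totals $S_0 + \pi'_{2,j} v_{2,j}$ and $V_0 + \pi'_{2,j} v_{2,j}$ still satisfy the ROS constraint. Since advertiser $2$'s value on query $j$ strictly increases (from $0$ to $\pi'_{2,j} v_{2,j} > 0$), this single-query deviation strictly improves advertiser $2$'s value without breaking ROS, contradicting that $\mathbf{b}_2$ is undominated. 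Hence $b_{1,j} \ge \alpha v_{2,j}$, and because rFPA charges the winner their bid and advertiser $1$ wins outright, the payment equals $b_{1,j} \ge \alpha v_{2,j}$.

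There is no real obstacle here; the argument is entirely local to query $j$ and exploits the specific structure of rFPA (winner pays bid, so the marginal spend of the deviation exactly matches the marginal value when bidding one's value). The only care needed is in the allocation case analysis at the boundary $b_{1,j}/b'_{2,j} = \alpha$, which is ruled out by the strict inequality $b_{1,j} < \alpha v_{2,j}$, and in the trivial case $v_{2,j}=0$ handled at the start. The symmetric statement with advertisers $1$ and $2$ swapped follows by identical reasoning.
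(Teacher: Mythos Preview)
Your argument is correct and essentially identical to the paper's own proof: both contrapose, have advertiser~$2$ deviate to bid $v_{2,j}$, observe this yields strictly positive winning probability since $v_{2,j}/b_{1,j} > 1/\alpha$, and note that the pay-your-bid rule makes the marginal spend equal the marginal value so the ROS constraint is preserved. Your write-up is slightly more careful about edge cases (the $v_{2,j}=0$ case and the possibility $b_{1,j}/b'_{2,j} < 1/\alpha$), but the idea is the same.
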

\begin{lemma}
    \LemmaName{defection}
    For any set of undominated bids, if $\pi_{i, j} \in (0, 1)$ then $b_{i, j} \geq \frac{v_{i, j}}{1 + \ln(\alpha) + \ln(\beta_i)}$
    where $\beta_i = b_{i, j} / b_{2-i, j}$ is the ratio between advertiser $i$'s bid and the other advertiser's bid.
\end{lemma}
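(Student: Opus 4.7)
The plan is to carry out a first-order analysis of a small upward deviation on the single query $j$ by advertiser $i$, holding all of its other bids (and those of the opponent) fixed. By symmetry I focus on $i=1$. Since $\pi_{1,j}\in(0,1)$, the ratio $\beta_1 = b_{1,j}/b_{2,j}$ lies strictly inside $(1/\alpha,\alpha)$, so for all sufficiently small $\delta>0$ the perturbed bid $b_{1,j}+\delta$ still lives in the fractional region, where $\pi_{1,j}=\tfrac12(1+\ln_\alpha\beta_1)$ is smooth and strictly increasing in $b_{1,j}$. Because the deviation strictly raises advertiser~$1$'s value from query~$j$, the undominated hypothesis forces the ROS constraint to be violated for every such $\delta>0$.

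Using $d\pi_{1,j}/d b_{1,j} = 1/(2 b_{1,j}\ln\alpha)$, I would expand the incremental value and spend on query $j$:
$$\Delta V = \frac{v_{1,j}\,\delta}{2 b_{1,j}\ln\alpha} + O(\delta^2),\qquad \Delta S = \delta\left(\pi_{1,j}+\frac{1}{2\ln\alpha}\right)+O(\delta^2).$$
Let $L=\sum_{j'}\bigl(v_{1,j'}\pi_{1,j'}-c_{1,j'}\bigr)\ge 0$ denote the initial ROS slack. After the deviation the slack becomes $L-(\Delta S-\Delta V)$, so undominatedness requires $\Delta S-\Delta V>L$ for every admissible $\delta>0$. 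Sending $\delta\to 0^+$ forces $L=0$: if $L$ were positive, continuity of $\Delta S-\Delta V$ at $\delta=0$ would produce some small $\delta$ with $\Delta S-\Delta V<L$, contradicting undominatedness.

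With $L=0$, the coefficient of $\delta$ in $\Delta S-\Delta V$ must be nonnegative, giving
$$\pi_{1,j}+\frac{1-v_{1,j}/b_{1,j}}{2\ln\alpha}\ge 0.$$
Multiplying through by $2 b_{1,j}\ln\alpha>0$ and using the identity $2\pi_{1,j}\ln\alpha=\ln\alpha+\ln\beta_1$ rearranges to $b_{1,j}(1+\ln\alpha+\ln\beta_1)\ge v_{1,j}$, which is the desired bound.

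The delicate step will be extracting the weak first-order inequality from the discrete strict inequality $\Delta S-\Delta V>L$ that must hold for every $\delta>0$; this relies on continuity and differentiability of both the allocation and the spend on the interior of the fractional region, and on the observation that undominatedness simultaneously forces the slack $L$ to vanish. Everything else is routine differentiation and algebraic rearrangement; no reasoning about the opponent's bidding behaviour or about multi-query deviations is required, since the infinitesimal single-query upward perturbation already pins down the marginal value-per-spend trade-off that produces the claimed bound.
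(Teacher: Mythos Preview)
Your proposal is correct and follows essentially the same approach as the paper's proof: a first-order comparison of the marginal value and marginal cost of increasing $b_{1,j}$, using $\pi'(b)=\tfrac{1}{2b\ln\alpha}$. The paper frames the argument as a contrapositive (if the bound fails then $\Val'(b_1)>\Cost'(b_1)$, so a small bid increase raises value while \emph{increasing} the ROS slack, contradicting undominatedness) and thereby avoids your detour through first establishing $L=0$, but the computational content is identical.
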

The following technical lemma proves an upper bound on the PoA for $\rFPA(\alpha)$ for every $\alpha > 1$.
\begin{lemma}
    \LemmaName{rfpa_technical}
    For any set of undominated bids for two bidders, $\rFPA(\alpha)$ obtains at least $f(\alpha)$ fraction of the optimal welfare where
    \begin{equation}
        \EquationName{f_alpha}
        f(\alpha) = \max_{\substack{\gamma, \eta \geq 0 \\ \gamma + \eta = 1}} \min \left\{
            \eta \alpha,
            \gamma,
            \min_{\beta \in [1/\alpha, \alpha]} g(\alpha, \beta, \gamma, \eta)
        \right\} 
    \end{equation}
    and
    \begin{equation}
        \EquationName{g_alpha_beta_gamma_eta}
        g(\alpha, \beta, \gamma, \eta) =
        \frac{\gamma}{2} \left( 1 + \ln(\beta) / \ln(\alpha) \right) +
        \frac{\eta(1 + \ln(\beta) / \ln(\alpha))}{2(1+\ln(\alpha)+\ln(\beta))} +
        \frac{\eta(1 - \ln(\beta) / \ln(\alpha))}{2\beta(1+\ln(\alpha)+\ln(\beta))}.
    \end{equation}
\end{lemma}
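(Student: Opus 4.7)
The plan is to exhibit, for every undominated bid profile, a single lower bound on the liquid welfare that is a convex combination of the welfare itself and the total spend, and then to match this weighted sum to the three arguments of the $\min$ defining $f(\alpha)$. Writing $W_j = \pi_{1,j} v_{1,j} + \pi_{2,j} v_{2,j}$ for the welfare on query $j$ and $S_j = \pi_{1,j} b_{1,j} + \pi_{2,j} b_{2,j}$ for the spend on query $j$, the two ROS constraints sum (since winners in $\rFPA$ pay their bid and $T_i = 1$) to $\sum_j S_j \leq \LW$. Hence for every $\gamma, \eta \geq 0$ with $\gamma + \eta = 1$,
\[
    \LW \;\geq\; \gamma \sum_j W_j + \eta \sum_j S_j,
\]
so the aim is to lower bound the right-hand side query by query and then optimize over $(\gamma, \eta)$ at the end.

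Fix an optimal (deterministic) allocation, partition the queries into $Q_1, Q_2$ accordingly, and set $M_i = \sum_{j \in Q_i} v_{i,j}$ so that $\OPT = M_1 + M_2$. For each $j \in Q_1$ I would prove the per-query bound
\[
    \gamma W_j + \eta S_j \;\geq\; v_{1,j} \cdot \min\!\left\{\gamma,\; \eta\alpha,\; \min_{\beta \in [1/\alpha,\alpha]} g(\alpha, \beta, \gamma, \eta)\right\},
\]
by a case split on the $\rFPA$ allocation whose three branches match the three arguments of the $\min$ one-to-one. If advertiser $1$ wins query $j$ outright, then $W_j = v_{1,j}$, giving a lower bound of $\gamma v_{1,j}$. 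If advertiser $2$ wins outright, then \Lemma{wp1_win_lb} gives $b_{2,j} \geq \alpha v_{1,j}$, so $S_j \geq \alpha v_{1,j}$ and the lower bound is $\eta \alpha v_{1,j}$. In the shared case, set $\beta = b_{1,j}/b_{2,j} \in [1/\alpha, \alpha]$, so that $\pi_{1,j} = \tfrac{1}{2}(1 + \ln\beta/\ln\alpha)$ and $\pi_{2,j} = 1 - \pi_{1,j}$; applying \Lemma{defection} to bidder $1$ yields $b_{1,j} \geq v_{1,j}/(1 + \ln\alpha + \ln\beta)$, and then $b_{2,j} = b_{1,j}/\beta \geq v_{1,j}/[\beta(1+\ln\alpha+\ln\beta)]$ is recovered from the bid ratio itself. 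Substituting these into $\gamma \pi_{1,j} v_{1,j} + \eta \pi_{1,j} b_{1,j} + \eta \pi_{2,j} b_{2,j}$ reproduces exactly $v_{1,j} \cdot g(\alpha, \beta, \gamma, \eta)$ from \Equation{g_alpha_beta_gamma_eta}.

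The argument on $Q_2$ is symmetric: swapping the two bidders sends $\beta$ to $1/\beta$, which stays in $[1/\alpha, \alpha]$, and yields the same bound with $v_{2,j}$ in place of $v_{1,j}$. Summing the per-query bounds over all queries gives $\gamma \sum_j W_j + \eta \sum_j S_j \geq \OPT \cdot \min\{\gamma,\, \eta\alpha,\, \min_\beta g(\alpha,\beta,\gamma,\eta)\}$, and maximizing over $(\gamma, \eta)$ with $\gamma + \eta = 1$ delivers $\LW \geq f(\alpha)\,\OPT$. The step I expect to require the most care is the shared-allocation case: a naive second application of \Lemma{defection} to bidder $2$ would introduce $v_{2,j}$ into the per-query bound and obstruct normalization by $v_{1,j}$ alone; recovering the losing bid through the ratio $\beta$ rather than through \Lemma{defection} is what collapses all three cases onto the three arguments of the $\min$ in $f(\alpha)$. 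The $\beta$-minimization of $g$ is left unevaluated at this stage and handled later (numerically at $\alpha = 1.4$) in the proof of \Theorem{rFPA_PoA}.
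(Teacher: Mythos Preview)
Your proposal is correct and follows essentially the same approach as the paper: the same three-way case split on the bid ratio, the same use of \Lemma{wp1_win_lb} for the complete-misallocation case, and the same crucial trick of applying \Lemma{defection} only to the optimal bidder and recovering the other bid through the ratio $\beta$. The only difference is presentational: the paper first records the two separate inequalities $\LW \geq \sum_j m_\cdot\, v^*_j$ and $\LW \geq \sum_j s_\cdot\, v^*_j$, then sets up a linear program over the mass vector $(x_1,\{x_2^\beta\},x_3)$ and dualizes to obtain the $\max_{\gamma+\eta=1}\min\{\cdots\}$ expression, whereas you take the convex combination $\gamma W_j + \eta S_j$ up front and bound it per query, arriving at the same formula without the explicit LP-duality step.
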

\begin{proof}
The structure of the proof is roughly similar to that in \cite{Mehta22}.
Fix a query and let $i^* \in \{1, 2\}$ be the advertiser that wins the query $j$ in the optimal allocation and let $i^o = 2 - i^*$ refer to the other bidder.
Let $\bs_j$, $\bo_j$ be their respective bids.
We consider three classes of sets depending on $\bs_j / \bo_j$.

\noindent {{\bf Case 1:} $\bs_j / \bo_j < 1/\alpha$.}
Let $Q_1 = \setst{j \in Q}{\bs_j / \bo_j \leq 1/\alpha}$.
Define $m_1$ as the probability that the optimal bidder is matched to query $j$ whenever $j \in Q_1$.
Furthermore, let $\Spend_1(j)$ be the spend in this query.
Then $m_1 = 0$ and $\Spend_1(j) = \bo_j \geq \alpha \vs_j$ where the inequality is from \Lemma{wp1_win_lb}.

\noindent {{\bf Case 2:} $1/\alpha \leq \bs_j / \bo_j \leq \alpha$.}
For $\beta \in [1/\alpha, \alpha]$, let $Q_2^\beta = \setst{j \in Q}{\bs_j / \bo_j = 1/\beta}$.
Define $m_2^\beta$ as the probability of matching $i^*$ to query $j$ whenever $j \in Q_2^\beta$ and let $\Spend_2^{\beta}(j)$ be the expected spend on query $j$.
Then $m_2^\beta = \frac{1}{2}\left( 1 + \frac{\ln \beta}{\ln \alpha} \right)$ and
$\Spend_2^\beta(j)
= m_2^\beta \cdot \bs + (1 - m_2^\beta) \cdot \bo
= m_2^\beta \cdot \bs + (1 - m_2^\beta) \cdot \frac{\bs}{\beta}
\geq m_2^\beta \cdot \frac{\vs}{1+\ln(\alpha)+\ln(\beta)} + (1 - m_2^{\beta}) \cdot \frac{\vs}{\beta(1+\ln(\alpha)+\ln(\beta))}
$,
where the last inequality is from \Lemma{defection}.

\noindent {{\bf Case 3:} $\alpha < \bs_j / \bo_j$.}
Finally, let $Q_3 = \setst{j \in Q}{\bs_j / \bo_j \geq \alpha}$.
For any $j \in Q_1$, define $m_3$ as the probability that the optimal bidder is matched in $j$, and $\Spend_3(j)$ be the expected spend in $j$. Then $m_3 = 1$ and $\Spend_3(j) \geq 0$.

Since the number of queries $Q$ is finite, there exists a finite list $\beta_1, \ldots, \beta_k$ such that
$Q = Q_1 \cup Q_3 \cup \left( \bigcup_{\ell=1}^k Q_{2}^{\beta_\ell} \right)$.
Now observe that the ROS constraint implies that
\begin{align*}
    \LW
    & \geq \sum_{j \in Q_1} \Spend_1(j) + \sum_{\ell=1}^{k} \sum_{j \in Q_2^{\beta_\ell}} \Spend_2^{\beta}(j) + \sum_{j \in Q_3} \Spend_3(j) \\
    & \geq \sum_{j \in Q_1} s_1 \vs_j + \sum_{\ell=1}^k \sum_{j \in Q_2^{\beta_{\ell}}} s_2^{\beta_{\ell}} \vs_j
    + \sum_{j \in Q_3} s_3 \vs_j
\end{align*}
where $s_1 = \alpha$ and $s_2^\beta = \frac{1+\frac{\ln \beta}{\ln \alpha}}{2(1+\ln\alpha+\ln\beta)} + \frac{1-\frac{\ln \beta}{\ln \alpha}}{2\beta(1+\ln \alpha+\ln\beta)}$, and $s_3 = 0$.
Indeed, in the first inequality, the RHS is the total spend and the LHS is the total liquid welfare
(recall that if the bids are undominated then the liquid welfare of each advertiser is an upper bound on the spend of each bidder).
Next, observe that we can lower bound the liquid welfare by
\begin{align*}
    \LW
    & \geq \sum_{j\in Q_1} m_1 \vs_j + \sum_{\ell=1}^k \sum_{j \in Q_2^{\beta_\ell}} m_2^{\beta_\ell} \vs_j + \sum_{j \in Q_3} m_3 \vs_j,
\end{align*}
where $m_1, m_2^\beta, m_3$ are as described above.
Finally, let $\OPT = \sum_{j \in Q} \vs_j$ be the value of the optimal allocation.
Let $x_1 = \sum_{j \in Q_1} \vs_j / \OPT$, $x_2^\beta = \sum_{j \in Q_2^{\beta}} \vs_j / \OPT$, and $x_3 = \sum_{j \in Q_3} \vs_j / \OPT$.
Thus,
\[
    \frac{\LW}{\OPT} \geq \max\left\{
        s_1 x_1 + \sum_{\ell=1}^{k} s_2^{\beta_{\ell}} x_{2}^{\beta_{\ell}} + s_3 x_3,
        m_1 x_1 + \sum_{\ell=1}^{k} m_2^{\beta_{\ell}} x_{2}^{\beta_{\ell}} + m_3 x_3
    \right\}.
\]
Note that $x_1 + \sum_{\ell=1}^k x_2^{\beta_\ell} + x_3 = 1$ so to find the optimum ratio of $\LW/\OPT$, it suffices to solve the following optimization problem, where $\mu_2$ is some finite measure on $[1/\alpha, \alpha]$.
$$\inf~ z$$
$$z\geq \int m_2^\beta \, \dd \mu_2(\beta) + x_3 m_3$$
$$z\geq x_1 s_1 + \int s_2^\beta \, \dd \mu_2(\beta)$$
$$x_1 + \mu_2([1/\alpha, \alpha]) + x_3 \geq 1$$
A standard duality argument shows that the above program is bounded below by
\[
\max_{\substack{\gamma, \eta \geq 0 \\ \gamma + \eta = 1}}
\min \left\{ \eta \alpha, \gamma,
\min_{\beta \in [1/\alpha, \alpha]} \gamma m_2^\beta + \eta s_2^\beta
\right\}
\]
which is exactly \Equation{f_alpha} and \Equation{g_alpha_beta_gamma_eta}.
\end{proof}

\begin{proof}[Proof of \Theorem{rFPA_PoA}]
    We take $\gamma = 0.56$ so $\eta = 0.44$. Since the function is complicated, we can plot the three terms in the definition of \Equation{f_alpha} in \Figure{rfpa_approx} below.
    \begin{figure}[ht]
        \centering
        \includegraphics[width=0.7\textwidth]{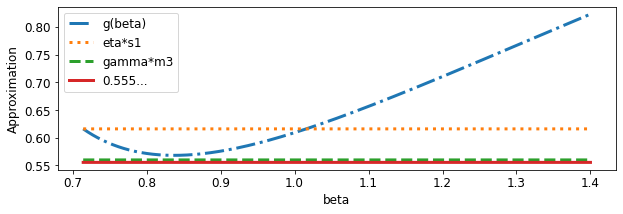}
        \caption{\footnotesize Plot of the three components of $f(\alpha)$ as defined in \Equation{f_alpha} when $\alpha=1.4$.
            The dashed-dotted blue line is $g(\beta)$ with $\alpha, \eta, \gamma$ fixed as defined in \Equation{g_alpha_beta_gamma_eta}.
            The dotted orange line is $\eta s_1$ and the dashed green line is $\gamma m_3$.
            The solid red line at the bottom is $1/1.8 = 0.555\ldots$.
            Thus, $f(\alpha=1.4) \geq 1/1.8$.
        }
        \FigureName{rfpa_approx}
    \end{figure}
\end{proof}

\noindent \textbf{Comparison with truthful pricing.}
Here, we compare the efficiency between $\rFPA(\alpha)$ and its truthful variant which we call $\randauction(\alpha)$.
The latter auction is a continuous version of the one used in \citet{Mehta22}.
For completeness, we prove that it has a PoA of 1.896 in \Appendix{randauction_poa}.
\begin{definition}[$\randauction(\alpha)$]
\DefinitionName{Rand_alpha}
We define the allocation and price function of $\randauction(\alpha)$ (for $\alpha \geq 1$) as follows (see \Appendix{Rand_alpha_calculation} for the calculation of the prices).
\begin{itemize}
\item If $b_1 = \beta b_2$, for $\beta \in [\frac{1}{\alpha}, \alpha]$, then
\begin{itemize}
    \item bidder 1 wins with 
probability $\frac{1}{2}(1 + \ln_\alpha \beta)$, and pays an expected price of $b_2 \cdot \frac{\beta - 1/\alpha}{2 \ln \alpha}$.
    \item bidder 2 wins with probability $\frac{1}{2}(1 - \ln_\alpha \beta)$, and pays an expected price of $b_1 \cdot \frac{1/\beta - 1/\alpha}{2 \ln \alpha}$.
\end{itemize}
\item If $b_1 \geq \alpha b_2$: bidder 1 wins with probability 1 and has a price of $b_2 \cdot \frac{\alpha-1/\alpha}{2 \ln \alpha}$.
\end{itemize}
\end{definition}
\begin{claim}
    \ClaimName{rand_poa_lb}
    For any $\alpha \geq 1$, $\randauction(\alpha)$ has a PoA of at least $1 + \frac{2 \ln(\alpha)}{\alpha - 1/\alpha}$.
\end{claim}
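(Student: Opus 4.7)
The plan is to construct a single-query, two-bidder instance admitting a $0$-equilibrium whose liquid welfare is a factor $1/W$ of the optimum, where $W := 1 + \frac{2\ln\alpha}{\alpha - 1/\alpha}$. Let $P := \frac{\alpha - 1/\alpha}{2\ln\alpha}$ denote the ``price multiplier'' appearing in \Definition{Rand_alpha} when a bidder wins outright, so that $W = 1 + 1/P$. In the instance, advertiser $1$ has value $1$ and advertiser $2$ has value $W$ on the single query; the welfare-maximizing allocation assigns the query to advertiser $2$ (welfare $W$), while the bad equilibrium constructed below assigns it to advertiser $1$ (welfare $1$).

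Take the bids to be $b_2 = 1/P = \frac{2\ln\alpha}{\alpha - 1/\alpha}$ and any $b_1 > W\alpha$. Since $W > 1/P$, we have $b_1 > W\alpha > \alpha b_2$, so \Definition{Rand_alpha} awards the query to advertiser $1$ outright with expected payment $b_2 \cdot P = 1$, exactly saturating their ROS constraint. Advertiser $2$'s ROS is trivial (they win nothing), and advertiser $1$ already attains the maximum possible value of $1$, so only advertiser $2$'s deviations need to be ruled out.

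For an outright deviation $b' \geq \alpha b_1$, advertiser $2$'s payment $b_1 \cdot P$ strictly exceeds $W\alpha \cdot P = W \cdot \frac{\alpha^2 - 1}{2\ln\alpha} > W$, using the elementary inequality $\alpha^2 - 1 > 2\ln\alpha$ for $\alpha > 1$; hence ROS is violated. For a randomized deviation $b' = r b_1$ with $r \in (1/\alpha, \alpha]$, a short rearrangement of the ROS-feasibility inequality (expected payment $\frac{b_1(r - 1/\alpha)}{2\ln\alpha}$ bounded by expected value $W \cdot \frac{1 + \ln_\alpha r}{2}$) gives the condition $b_1 \leq W \cdot g(r)$ with $g(r) := \frac{\ln\alpha + \ln r}{r - 1/\alpha}$. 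The main technical obstacle is to show $\sup_{r \in (1/\alpha, \alpha]} g(r) = \alpha$: the limit $r \to 1/\alpha^+$ evaluates by L'Hopital to $\alpha$, and the substitution $u = \alpha r$ makes the sign of $g'$ reduce to the sign of $1 - \ln u - 1/u$, which is nonpositive for $u \geq 1$ (equality only at $u = 1$), so $g$ is strictly decreasing on $(1/\alpha, \alpha]$. Because $b_1 > W\alpha = W \cdot \sup_{r} g(r)$, every randomized deviation by advertiser $2$ is ROS-infeasible, yielding a $0$-equilibrium with liquid-welfare ratio $W$ as claimed.
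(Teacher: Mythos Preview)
Your proof is correct, and in fact somewhat cleaner than the paper's. The paper constructs a \emph{two}-query instance: advertiser~$1$ has value $1$ on a ``subsidy'' query (where advertiser~$2$ has value $0$) and value $\eps$ on a contested query where advertiser~$2$ has value $1/P$. Advertiser~$1$ uses a large uniform multiplier to win both queries, paying $0$ on the first and $1$ on the second; the subsidy query is what allows advertiser~$1$'s ROS to survive the overpayment on the contested query. The paper then takes $\eps \to 0$ to reach the bound $1+1/P$. Your construction collapses this to a single query by giving advertiser~$1$ value exactly $1$ and setting advertiser~$2$'s bid at $1/P$, so that advertiser~$1$'s payment equals its value on the nose---no subsidy query and no limiting argument are needed.

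The deviation analysis also differs slightly. The paper invokes the uniform-bidding characterization of \cite{AggarwalBM19} and reduces the ROS check for advertiser~$2$ to the single-line inequality $\ln(\alpha\beta) < \alpha\beta - 1$. You instead check all deviations $b' = r b_1$ directly and reduce ROS-feasibility to $b_1 \leq W \cdot g(r)$ with $g(r) = \frac{\ln(\alpha r)}{r - 1/\alpha}$; the substitution $u=\alpha r$ and the inequality $1 - 1/u - \ln u \leq 0$ show $g$ is strictly decreasing with $\sup g = \alpha$. The two inequalities are of course equivalent forms of the same logarithm bound. Your route is a touch more work but self-contained (it does not rely on the uniform-bidding lemma), and it delivers the exact PoA ratio $W$ rather than approaching it in the limit.
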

The proof of \Claim{rand_poa_lb} can be found in \Appendix{rand_poa_lb}.
In particular, using the same $\alpha = 1.4$ as in \Theorem{rFPA_PoA} shows that $\randauction(\alpha)$ has a worst-case PoA of at least $1.98$ despite having exactly the same allocation function as $\rFPA(\alpha)$.

\section{Experiments}
\SectionName{experiments}
In this section, we provide some experimental results to compare and contrast between SPA, rFPA (defined in \Definition{rfpa}), and rTruth (defined in \Definition{Rand_alpha}).
The purpose of these experiments is only to visually illustrate the utility of randomness and non-truthfulness in the auto-bidding setting.

\textbf{Simulation setup.}
We generate the synthetic data in four different ways.
\begin{itemize}
    \item[(a)] For the first set of experiments, there are 50 queries and the advertisers' values for each query are drawn uniformly at random from $[0.3,1]$.
    \item[(b)] For the second set of experiments, there are also 50 queries and the advertisers' values are drawn as follows. With probability $1/2$, an advertiser's values are drawn from $[1, 1.2]$ and with the remaining probability, the values are drawn from $[0.3, 0.5]$. Note that the advertisers' values are independent from each other.
    \item[(c)] For the third set of experiments, the advertisers' values are deterministic and there are two queries.
    Advertiser $1$ has values $1$ and $0.01$ for the queries, respectively
    and advertiser $2$ has values $0.01$ and $0.99$ for the queries, respectively.
    \item[(d)] For the fourth set of experiments, for completeness, we contrive a simple example where we expect randomization to be detrimental.
    There is only a single query where advertiser $1$ has value $1$ and advertiser $2$ has value $0.9$.
\end{itemize}
For the first two experiments, we run each experiment $20$ times and report the expected results.

\noindent \textbf{Computing best response and equilibrium.}
To compute an equilibrium for the two advertisers, we iteratively compute a best response for each advertiser until convergence.
Since uniform bidding is the optimal bidding strategy in a truthful auction \cite{AggarwalBM19}, it is straightforward to compute a best response in SPA and rTruth.
To compute a best response for rFPA is slightly more challenging as uniform bidding is not optimal.
Fortunately, computing the best response turns out to be a convex program and so can be solved relatively efficiently.\footnote{It is convex program assuming that the bids are within $\alpha$ factor of each other. This is without loss of generality since an advertiser never has incentive to bid $1/\alpha$ factor less than or $\alpha$ factor more than the other bid.}

\noindent \textbf{Simulation results.}
\Figure{experiment_results} shows the simulation results.
In \Figure{uniform} and \Figure{four_parts} we see that the randomized first price auction with an appropriately tuned $\alpha$ outperforms rTruth and the SPA.
In \Figure{z_graph}, we see an example where randomization is extremely important in order to achieve good welfare.
Finally, in the last example in \Figure{one_high_one_low}, the instance was constructed so that randomization is actually (slightly) detrimental.
Recall that in this instance, there are two relatively high value advertisers and a single query.
The SPA picks the correct advertiser but randomization may sometimes swap if the two values are not too far from each other.
\begin{figure}[hb]
     \centering
     \begin{subfigure}[b]{0.47\textwidth}
         \centering
         \includegraphics[width=\textwidth]{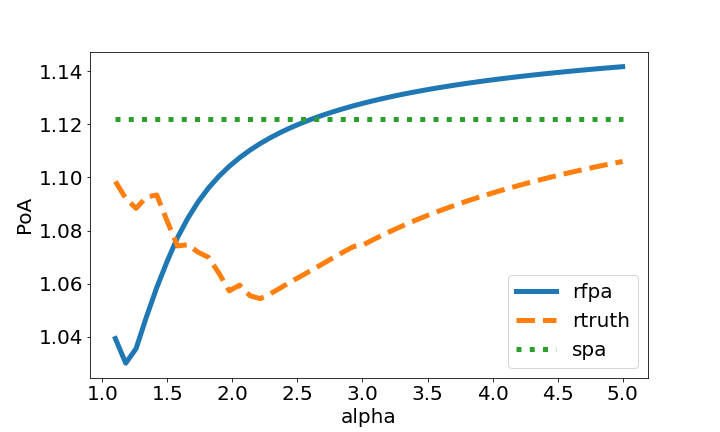}
         \caption{}
         \FigureName{uniform}
     \end{subfigure}
     \hfill
     \begin{subfigure}[b]{0.47\textwidth}
         \centering
         \includegraphics[width=\textwidth]{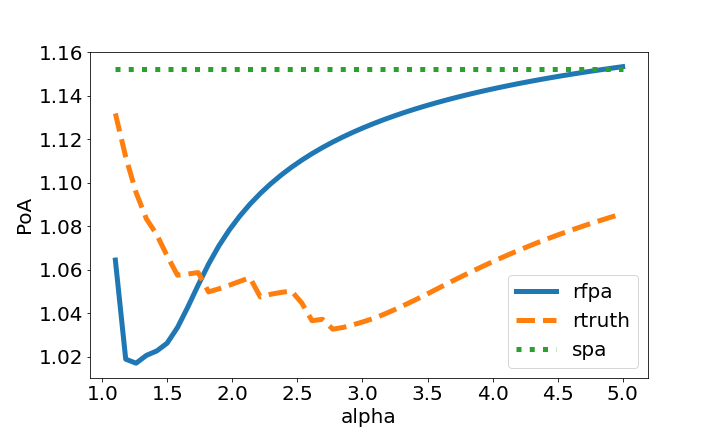}
         \caption{}
         \FigureName{four_parts}
     \end{subfigure}
     \\ \vspace{-1.4em}
     \begin{subfigure}[b]{0.47\textwidth}
         \centering
         \includegraphics[width=\textwidth]{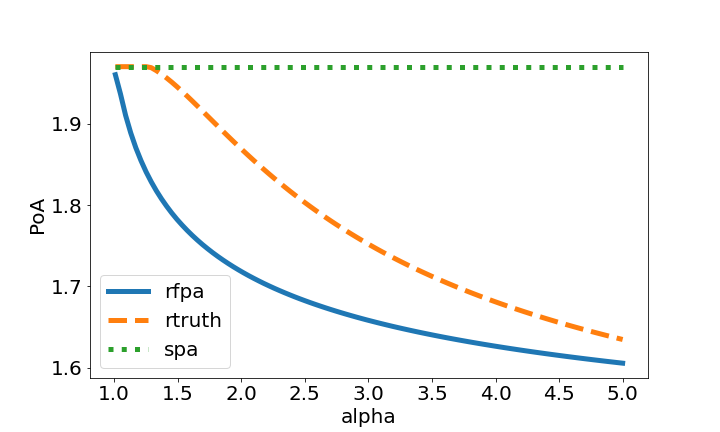}
         \caption{}
         \FigureName{z_graph}
     \end{subfigure}
     \hfill
     \begin{subfigure}[b]{0.47\textwidth}
         \centering
         \includegraphics[width=\textwidth]{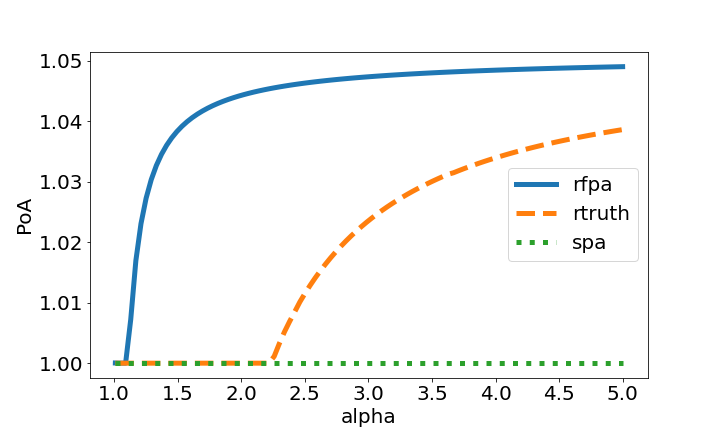}
         \caption{}
         \FigureName{one_high_one_low}
     \end{subfigure}
    \caption{\footnotesize
        Simulation results contrasting the randomized first price, randomized truthful, and SPA in an auto-bidding setting. (Note: the $y$-axis is the PoA so lower is better.)
    }
    \FigureName{experiment_results}
\end{figure}

\bibliographystyle{plainnat}
\bibliography{refs}

\appendix
\section{Proof of \Theorem{poa_fpa}}
\AppendixName{poa_fpa}
\newcommand{\vfp}{v^{\textsc{FP}}}
\begin{proof}[Proof of \Theorem{poa_fpa}]
    Let $Q$ be the set of queries and $A$ be the set of advertisers.
    For $j \in Q$, let $\vfp_j$ be the value that the winner of query $j$ has for query $j$ and let $v^*_j = \max_{i \in A} v_{i, j}$ be the highest value for query $j$.
    Let $S_1 = \setst{j}{\vfp_j = v^*_j}$ and $S_2 = \setst{j}{\vfp_j < v^*_j}$.
    Finally, for $j \in Q$, let $p_j$ be the price that was paid for query $j$.
    
    First, we observe that we have
    \begin{equation}
        \EquationName{fpa_1}
        \sum_{j \in Q} \vfp_j
        ~\geq~\sum_{j \in S_1} \vfp_j
        ~=~\sum_{j \in S_1} v^*_j.
    \end{equation}
    Next, observe that for $j \in S_2$, we must have $p_j \geq v^*_j$.
    Indeed, if $p_j < v^*_j$ then the highest value bidder may bid $v^*_j$ to win query $j$ without violating their ROS constraint.
    Thus, we have
    \begin{equation}
        \EquationName{fpa_2}
        \sum_{j \in Q} \vfp_j
        ~\geq~\sum_{j \in Q} p_j
        ~\geq~\sum_{j \in S_2} p_j
        ~\geq~\sum_{j \in S_2} v^*_j,
    \end{equation}
    where the first inequality is because all bidders must satisfy their ROS constraint.
    Combining \Equation{fpa_1} and \Equation{fpa_2}, we get that
    \[
        2 \sum_{j \in Q} \vfp_j
        ~\geq~\sum_{j \in Q} v^*_j,
    \]
    which proves the theorem.
\end{proof}

\section{Missing proofs from \Section{rfpa}}
\AppendixName{rfpa}
\subsection{Proofs of \Lemma{wp1_win_lb} and \Lemma{defection}}
\AppendixName{wp1_win_lb_defection}
\begin{proof}[Proof of \Lemma{wp1_win_lb}]
    Suppose that advertiser $1$ wins query $j$ with probability $1$ but places a bid $b_{1,j} < \alpha v_{2, j}$.
    Then we claim that the advertisers are not in equilibrium.
    Indeed, advertiser $2$ would then be able to bid $v_{2,j}$ and win with probability $\pi_{2, j} = \frac{1}{2}\left( 1 + \ln_{\alpha} (v_{2,j} / b_{1,j}) \right) > 0$ because $v_{2,j} / b_{1,j} > 1/\alpha$.
    Thus, advertiser $2$ would increase their value by $\pi_{2,j} v_{2,j}$.
    Furthermore, advertiser $2$'s ROS constraint continues to be satisfied because advertiser $2$ pays $\pi_{2,j} v_{2,j}$ for this query.
    Thus, in equilibrium, it must be that $b_{1, j} \geq \alpha v_{2,j}$.
\end{proof}

\begin{proof}[Proof of \Lemma{defection}]
    We prove this for $i = 1$ as the argument is identical for $i = 2$.
    We also drop the subscript $j$ since the query is fixed in the proof and we let $\beta = \beta_1$.
    
    Overall, the proof is straightforward.
    We show that if $b_1 < \frac{v_1}{1 + \ln(\alpha) + \ln(\beta)}$ then advertiser $1$'s bids are not undominated
    and that increasing their bid on the present query strictly increases their value while ensuring that the ROS constraint is satisfied.
    For $b \in [1/\alpha, \alpha] b_2$, let $\pi(b) = \pi_1(b, b_2) = \frac{1}{2} \left( 1 + \frac{ \ln(b/b_2) }{\ln \alpha } \right)$.
    Let $\Val(b) = v_1 \cdot \pi(b)$ and $\Cost(b) = b \cdot \pi(b)$.
    To prove the claim, it suffices to show that if $b_1 < \frac{v_1}{1 + \ln(\alpha) + \ln(\beta)}$ then $\frac{\dd}{\dd b} \Val(b_1) > \frac{\dd}{\dd b} \Cost(b_1)$,
    which implies that the advertiser can increase their value by increasing their bid while increasing the slack of the ROS constraint.
    This implies that no equilibrium can have $b_1 < \frac{v_1}{1+\ln(\alpha)+\ln(\beta)}$.
    
    We now perform some calculations.
    We have $\frac{\dd}{\dd b} \pi(b) = \frac{1}{2b\ln(\alpha)}$.
    Therefore,
    \[
        \frac{\dd}{\dd b} \Val(b) = \frac{v_1}{2b \ln \alpha}
        \quad \text{and} \quad
        \frac{\dd}{\dd b} \Cost(b) = \frac{1}{2 \ln(\alpha)} + \frac{1}{2} \left( 1 + \frac{\ln(b/b^o)}{\ln(\alpha)} \right).
    \]
    Now, assuming that $v_1 > b_1(1+\ln(\alpha)+\ln(\beta))$, we have
    \begin{align*}
        \frac{\dd}{\dd b} \Val(b_1)
        ~=~ \frac{v_1}{2b_1 \ln(\alpha)}
        ~>~ \frac{1+\ln(\alpha)+\ln(\beta)}{2\ln(\alpha)}
        ~=~ \frac{\dd}{\dd b} \Cost(b_1),
    \end{align*}
    which completes the proof.
\end{proof}

\subsection{Computation of truthful prices for $\randauction(\alpha)$.}
\AppendixName{Rand_alpha_calculation}
In this section, we show that the prices defined in \Definition{Rand_alpha} are exactly the prices obtained using Myerson's payment formula.
Let $b$ denote an advertiser's bid and let $b^o$ denote the other bid.
Let $\beta = b/b^o$. We assume that $\beta \in [1/\alpha, \alpha]$.
Note that if $\beta < 0$ then the advertiser pays $0$ and the advertiser's price remains constant for $\beta \in [\alpha, \infty)$.
Finally, let $\pi(\beta) = \frac{1}{2}\left(1 + \ln_{\alpha} \beta\right)$; this is the probability that the advertiser bidding $b$ wins the query.
Then the price obtained using Myerson's payment formula for bidding $b$ when the other bid is $b^o$
\begin{align*}
    \beta b^o & \cdot \pi(\beta) - \int_{b^o / \alpha}^b \pi(x / b^o) \, \dd x \\
    & ~=~ b^o \left[ \beta \pi(\beta) - \int_{1/\alpha}^{\beta} \pi(u) \, \dd u \right] \\
    & ~=~ b^o \left[ \frac{\beta}{2}\left( 1 + \frac{\ln \beta}{\ln \alpha} \right) - \frac{1}{2}(\beta - \frac{1}{\alpha}) - \frac{1}{2\ln \alpha} \left( \beta \ln \beta - \beta + \frac{1}{\alpha} \ln \alpha + \frac{1}{\alpha} \right) \right] \\
    & ~=~ b^o \cdot \frac{\beta-1/\alpha}{2\ln\alpha}.
\end{align*}

\subsection{Proof of \Claim{rand_poa_lb}}
\AppendixName{rand_poa_lb}
\begin{proof}[Proof of \Claim{rand_poa_lb}]
    Consider the following instance with two advertisers and two queries.
    Let $s = \frac{\alpha-1/\alpha}{2\ln\alpha} \geq 1$ and $\eps < 1/s$.
    \begin{itemize}
        \item For query $1$, advertiser $1$ has value $1$ and advertiser $2$ has value $0$.
        \item For query $2$, advertiser $1$ has value $\eps$ and advertiser $2$ has value $1/s$.
    \end{itemize}
    From \cite{AggarwalBM19}, we know that the optimal bidding strategy for both advertisers is a uniform bidding strategy with multiplier at least $1$.
    We claim that the following bid multipliers form an equilibrium.
    \begin{itemize}
        \item Advertiser $1$ uses a bid multiplier that is at least $\alpha/\eps s$.
        \item Advertiser $2$ uses a bid multiplier equal to $1$.
    \end{itemize}
    In this case, advertiser $1$ gets query $1$ for free and query $2$ at a price of $1$.
    The value for advertiser $1$ is $1+\eps$.
    On the other hand, advertiser $2$ gets no queries and pays nothing.
    So both advertisers satisfy their ROI constraint.
    
    Advertiser $1$ has no incentive to deviate since advertiser $1$ gets both queries.
    We now show that advertiser $2$ also has no incentive to deviate.
    Let $b_1 \geq \alpha / s$ be the bid of advertiser $1$.
    To win query $2$ with non-zero probability, advertiser $2$ must bid at strictly more than $b_1 / \alpha$.
    Suppose that advertiser $2$ bids $\beta b_1$ where $\beta > 1/\alpha$.
    We claim that advertiser $2$ does \emph{not} satisfy their ROS constraint.
    We assume that $\beta \in (1/\alpha, \alpha]$ as the case $\beta > \alpha$ is identical to the case $\beta = \alpha$.
    In this case, advertiser $2$ wins has expected value $\frac{1}{2}\left(1 + \frac{\ln \beta}{\ln \alpha} \right) \cdot \frac{1}{s}$
    and pays $b_1 \cdot \frac{\beta - 1/\alpha}{2 \ln \alpha} \geq \frac{\alpha}{s} \cdot \frac{\beta - 1/\alpha}{2 \ln \alpha}$, in expectation.
    Let us check that
    \begin{equation}
        \EquationName{rvcg_impossible}
        \frac{1}{2} \left(1 + \frac{\ln \beta}{\ln \alpha}\right) < \alpha \cdot \frac{\beta - 1/\alpha}{2 \ln \alpha}
    \end{equation}
    whenever $\beta > 1/\alpha$.
    This implies that advertiser $2$ cannot get any positive value while satisfying their ROS constraint.
    Indeed, rearranging, we get that \Equation{rvcg_impossible} is equivalent to $\ln(\alpha)+\ln(\beta) = \ln(\alpha \beta) < \alpha \beta - 1$, which is true by \Claim{rvcg_numeric_inequality} (since $\alpha \beta > 1$).
\end{proof}

\begin{claim}
    \ClaimName{rvcg_numeric_inequality}
    For $x > 1$, $x-1 > \ln x$.
\end{claim}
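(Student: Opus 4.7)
The plan is to prove this as a standard one-variable calculus exercise, since the claim is a classical inequality often stated as $\ln x \le x-1$ with equality only at $x=1$. I will use the auxiliary function approach.

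First, I would define $f(x) = (x-1) - \ln x$ for $x > 0$ and note that $f(1) = 0$. The function is differentiable on $(0, \infty)$ with $f'(x) = 1 - 1/x$. Observing that $f'(x) > 0$ for all $x > 1$, the mean value theorem (or monotonicity of $f$ implied by the sign of $f'$) then gives $f(x) > f(1) = 0$ for every $x > 1$, which is exactly the statement $x - 1 > \ln x$.

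An alternative phrasing, which I would briefly mention if wanting a more geometric justification, is to appeal to the strict concavity of $\ln$: the tangent line to $y = \ln x$ at $x = 1$ is $y = x - 1$, and by strict concavity the graph of $\ln$ lies strictly below this tangent at every point $x \ne 1$, yielding the desired strict inequality for $x > 1$.

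There is no real obstacle here; the only care needed is to ensure the inequality is strict (not just weak) for $x > 1$, which follows either from $f'(x) > 0$ strictly on $(1, \infty)$ or from strict concavity of $\ln$. The proof should be a single short paragraph.
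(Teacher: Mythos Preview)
Your proposal is correct. The paper's proof takes a marginally different route: it exponentiates both sides to obtain the equivalent inequality $e^{x-1} > x$, substitutes $y = x-1$ to rewrite this as $e^y > 1+y$ for $y>0$, and then simply declares that last inequality to be straightforward. Your derivative/monotonicity argument on $f(x) = (x-1) - \ln x$ is just as short and arguably more self-contained, since it does not defer to another ``well-known'' inequality; conversely, the paper's rewriting makes the connection to the exponential Taylor expansion transparent. For a claim this elementary the two approaches are interchangeable, and your care about strictness is exactly what is needed.
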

\begin{proof}
    The inequality is equivalent to $e^{x-1} > x$ or, replacing $x-1$ with $y$, $e^y > 1+y$ for $y > 0$.
    It is straightforward to see that this last inequality is true.
\end{proof}

\subsection{PoA of $\randauction(\alpha)$}
\AppendixName{randauction_poa}
In this section, we prove a PoA bound for $\randauction(\alpha)$.
In our results, we assume that all advertisers bid at least their value which follows from the charaterization given by \citet{AggarwalBM19}.
\begin{theorem}
\TheoremName{randauction_poa}
Assuming that all advertisers bid at least their value,
$\randauction(\alpha)$ achieves a PoA of $1.897$ for $2$ bidders.
\end{theorem}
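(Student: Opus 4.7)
The plan is to mimic the LP-duality proof of \Lemma{rfpa_technical}, adapting the per-query spend bound to the truthful pricing of $\randauction(\alpha)$. Since the auction is truthful, \citet{AggarwalBM19} implies each advertiser bids uniformly, $b_{i,j} = \mu_i v_{i,j}$, and the hypothesis of the theorem gives $\mu_i \geq 1$. For a fixed query $j$, let $i^*$ be the optimal winner, $i^o$ the other advertiser, $\vs_j$ and $v^o_j$ their values (so $\vs_j \geq v^o_j$), $\bs_j$ and $\bo_j$ their bids, and $\beta_j = \bs_j/\bo_j$.

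I would partition queries into $Q_1 = \{j : \beta_j \leq 1/\alpha\}$, $Q_2^\beta = \{j : \beta_j = \beta \in [1/\alpha,\alpha]\}$, and $Q_3 = \{j : \beta_j \geq \alpha\}$, and in each case lower-bound both the expected spend on $j$ and the probability of correct matching. For $Q_1$, bidder $i^o$ wins with probability $1$ at price $\bs_j \cdot s$ where $s = (\alpha - 1/\alpha)/(2\ln\alpha)$, so spend is at least $\vs_j s$ (using $\mu_{i^*} \geq 1$) and the correct-match probability is $0$. For $Q_2^\beta$, summing the two truthful expected payments from \Definition{Rand_alpha} gives total spend $\bo_j \cdot (1+\beta)(1-1/\alpha)/(2\ln\alpha)$; rewriting $\bo_j = \mu_{i^*}\vs_j/\beta \geq \vs_j/\beta$ yields spend $\geq \vs_j \cdot t(\beta)$ with $t(\beta) = (1+\beta)(1-1/\alpha)/(2\beta\ln\alpha)$, and the correct-match probability is $m(\beta) = (1+\ln_\alpha\beta)/2$. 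For $Q_3$, the correct-match probability is $1$, so $\vs_j$ itself is contributed. The ROS constraint then gives $\LW \geq$ total spend, and directly $\LW \geq \sum_j \Pr[i^*(j)\text{ wins }j] \cdot \vs_j$.

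As in \Lemma{rfpa_technical}, passing to the LP dual over the distribution of OPT-mass across $Q_1, Q_2^\beta, Q_3$ yields
$$
f(\alpha) \,=\, \max_{\substack{\gamma,\eta\geq 0\\ \gamma+\eta=1}} \min\left\{\eta s,\; \gamma,\; \min_{\beta\in[1/\alpha,\alpha]} \eta t(\beta) + \gamma m(\beta)\right\},
$$
and $\text{PoA} \leq 1/f(\alpha)$. The inner $\beta$-minimum has a unique stationary point $\beta^* = \eta(1-1/\alpha)/\gamma$ (obtained by differentiating $\eta t + \gamma m$ in $\beta$). Choosing $\alpha, \eta, \gamma$ to equate the three terms in the outer min reduces to a single transcendental equation in $\alpha$ which I would solve numerically, obtaining PoA $\leq 1.897$ at the optimum and exactly matching the lower bound of \Claim{rand_poa_lb}.

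The main obstacle is this final step: one must verify that at the optimal $\alpha$ the stationary $\beta^*$ lies in $[1/\alpha, \alpha]$, that all three min-terms can be simultaneously equated, and that no boundary regime in $\beta$ is worse (a quick check shows $t(1/\alpha) + s\,m(1/\alpha) = s$, so the middle constraint is exactly tight at the left endpoint). The LP-duality machinery is otherwise identical to \Lemma{rfpa_technical}; the substantive difference is that truthful pricing gives a strictly smaller Case-$2$ spend coefficient $t(\beta)$ than the first-price spend in $\rFPA(\alpha)$, which is precisely what widens the optimal PoA from the $1.8$ of \Theorem{rFPA_PoA_Eq} to $\approx 1.897$ here.
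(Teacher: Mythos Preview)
Your proposal is correct and follows essentially the same route as the paper: both adapt the LP-duality machinery of \Lemma{rfpa_technical} by replacing the first-price spend bounds with the truthful expected payments from \Definition{Rand_alpha}, and then optimize $f(\alpha)$ numerically. Your Case~1 and Case~2 spend computations (using $b^*_j \geq v^*_j$ from uniform bidding with multiplier $\geq 1$) coincide with the paper's \Lemma{rand_technical}, and your $t(\beta) = (1{+}1/\beta)(1{-}1/\alpha)/(2\ln\alpha)$ is exactly the second summand of the paper's \Equation{g_alpha_beta_gamma_eta_rand}.

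The only substantive differences are cosmetic: (i) your first term is $\eta s$ with $s=(\alpha-1/\alpha)/(2\ln\alpha)$, which is what the Case~1 analysis actually yields (the $\eta\alpha$ appearing in the statement of \Equation{f_alpha_rand} is a carry-over typo from the rFPA lemma); (ii) you push the final optimization further analytically---locating the interior critical point $\beta^* = \eta(1-1/\alpha)/\gamma$ and observing the boundary identity $\eta\, t(1/\alpha) + \gamma\, m(1/\alpha) = \eta s$---whereas the paper simply plugs in numerical values of $\alpha$ and $\gamma$. These refinements are helpful but do not change the argument. Your remark that the bound matches \Claim{rand_poa_lb} exactly is a stronger tightness claim than the paper makes; it is plausible but would need an extra line of verification.
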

Note that \citet{Mehta22} proved a PoA of $1.896$ so their PoA is only marginally better.
\begin{lemma}
    \LemmaName{rand_technical}
    Assuming that all advertisers bid at least their value,
    then for any set of undominated bids for two bidders,
    $\randauction(\alpha)$ obtains at least $f(\alpha)$ fraction of the optimal welfare where
    \begin{equation}
        \EquationName{f_alpha_rand}
        f(\alpha) = \max_{\substack{\gamma, \eta \geq 0 \\ \gamma + \eta = 1}} \min \left\{
            \eta \alpha,
            \gamma,
            \min_{\beta \in [1/\alpha, \alpha]} g(\alpha, \beta, \gamma, \eta)
        \right\} 
    \end{equation}
    and
    \begin{equation}
        \EquationName{g_alpha_beta_gamma_eta_rand}
        g(\alpha, \beta, \gamma, \eta) =
        \frac{\gamma}{2} \left( 1 + \ln(\beta) / \ln(\alpha) \right) +
        \frac{\eta(1-1/\alpha)(1+1/\beta)}{2 \ln \alpha}.
    \end{equation}
\end{lemma}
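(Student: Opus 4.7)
The plan is to follow the template of \Lemma{rfpa_technical} almost verbatim, replacing only the per-query spend computations to reflect the truthful Myerson pricing of $\randauction(\alpha)$ and leveraging the bid-at-value hypothesis to convert spend lower bounds into $v^*$-based lower bounds. Specifically, fix a query $j$ with optimal winner $i^*$ (value $v^*_j$, bid $\bs_j$) and other bidder $i^o$ (bid $\bo_j$), and partition the queries by $\beta = \bs_j/\bo_j$ into the same three classes $Q_1 = \{j : \beta < 1/\alpha\}$, $Q_2^\beta = \{j : \bs_j/\bo_j = \beta\}$ for $\beta \in [1/\alpha, \alpha]$, and $Q_3 = \{j : \beta > \alpha\}$ used there. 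In each class we determine a matching probability $m_k$ for $i^*$ and a $v^*_j$-normalized spend lower bound $s_k$.

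The main new body of work is the Case~2 calculation. Using the Myerson prices from \Definition{Rand_alpha}, the total expected spend on a query in $Q_2^\beta$, summed over both advertisers, is
\[
\bo_j \cdot \frac{\beta - 1/\alpha}{2\ln\alpha} + \bs_j \cdot \frac{1/\beta - 1/\alpha}{2\ln\alpha} = \frac{\bo_j(1+\beta)(1-1/\alpha)}{2\ln\alpha}.
\]
The hypothesis $\bs_j \geq v^*_j$, equivalently $\bo_j \geq v^*_j/\beta$, then converts this into the lower bound $\frac{v^*_j(1-1/\alpha)(1+1/\beta)}{2\ln\alpha}$, which is exactly the second summand of $g$ in \Equation{g_alpha_beta_gamma_eta_rand}. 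The matching probability $m_2^\beta = \tfrac{1}{2}(1 + \ln\beta/\ln\alpha)$ is inherited verbatim from \Lemma{rfpa_technical}, as are the Case~3 coefficients $m_3 = 1$, $s_3 = 0$.

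The main obstacle is Case~1, which produces the $\eta\alpha$ term in \Equation{f_alpha_rand} and is where the two lemmas genuinely diverge. In $\rFPA$, \Lemma{wp1_win_lb} directly forces $\bo_j \geq \alpha v^*_j$, and since the first-price winner pays their bid, one reads off $s_1 = \alpha$. In $\randauction(\alpha)$ the winner's truthful payment is only $s \cdot \bs_j$ with $s = \frac{\alpha - 1/\alpha}{2\ln\alpha}$, so the bid-at-value hypothesis alone yields only the weaker $s_1 = s$. To recover $s_1 = \alpha$ I would invoke the uniform-bidding optimality of truthful auctions \cite{AggarwalBM19} to write $\bs_j = m^* v^*_j$ for a global multiplier $m^* \geq 1$, and then use a single-query upward deviation of $i^*$ across the threshold $\bo_j/\alpha$: the ROS-tightness forced by undominatedness, combined with the fact that near this threshold the marginal value rate of the deviation strictly exceeds its marginal Myerson-spend rate unless $m^*$ is sufficiently large, should push $m^*$ up to at least $\alpha/s$ and so promote the spend bound to $s_1 = \alpha$.

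Once $(m_k, s_k)$ are in place, the liquid welfare enjoys two lower bounds: the allocation-based $\LW \geq \sum_k m_k \sum_{j \in Q_k} v^*_j$ and, via the aggregated ROS constraints, the spend-based $\LW \geq \sum_k s_k \sum_{j \in Q_k} v^*_j$. Writing $x_k = \sum_{j \in Q_k} v^*_j / \OPT$ so that $\sum_k x_k = 1$, the ratio $\LW/\OPT$ is at least the maximum of the two resulting linear expressions in $x_k$, and the standard LP-duality step that closes the proof of \Lemma{rfpa_technical} — introducing multipliers $\gamma, \eta \geq 0$ with $\gamma + \eta = 1$ for the two bounds and placing all mass on whichever coefficient is smallest — reproduces exactly \Equation{f_alpha_rand}.
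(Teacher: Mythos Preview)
Your structure, the three-way case split on $\beta=\bs_j/\bo_j$, the Case~2 spend computation, and the duality step are exactly what the paper does; your Case~2 algebra and Case~3 coefficients coincide with the paper's line for line.

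The genuine gap is Case~1. Your proposed upgrade to $s_1=\alpha$ does not go through. Carry out the single-query deviation you describe: if $i^*$ raises its bid on a $Q_1$ query to $b$ just above the threshold $\bo_j/\alpha$, then with $\pi(b)=\tfrac12(1+\ln_\alpha(b/\bo_j))$ and Myerson cost $\bo_j\cdot\tfrac{b/\bo_j-1/\alpha}{2\ln\alpha}$ one gets, at $b=\bo_j/\alpha$,
\[
\frac{\dd}{\dd b}\Val=\frac{\alpha v^*_j}{2\bo_j\ln\alpha},\qquad
\frac{\dd}{\dd b}\Cost=\frac{1}{2\ln\alpha}.
\]
Undominatedness therefore forces $\bo_j\ge \alpha v^*_j$, i.e.\ the rTruth analogue of \Lemma{wp1_win_lb}, and nothing more. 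Crucially this constrains the \emph{winning} bid $\bo_j$, whereas the truthful price in Case~1 is $s\cdot \bs_j$ with $s=\tfrac{\alpha-1/\alpha}{2\ln\alpha}$, which depends only on the \emph{losing} bid $\bs_j$. The threshold computation does not involve $m^*$ or $\bs_j$ at all, so there is no mechanism here to ``push $m^*$ up to $\alpha/s$''; in particular one can have $Q_1$ queries with $m^*$ arbitrarily close to $1$. The only lower bound on $\bs_j$ available is the bid-at-value hypothesis $\bs_j\ge v^*_j$, giving $s_1=s$, not $\alpha$.

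This is precisely what the paper's own proof derives: it writes $\Spend_1(j)=\bs_j\cdot\tfrac{\alpha-1/\alpha}{2\ln\alpha}\ge \tfrac{\alpha-1/\alpha}{2\ln\alpha}\,v^*_j$ and stops. The $\eta\alpha$ in the displayed $f(\alpha)$ thus appears to be a typo carried over from \Lemma{rfpa_technical}; with the paper's Case~1 bound the first entry of the $\min$ should read $\eta\cdot\tfrac{\alpha-1/\alpha}{2\ln\alpha}$. You correctly sensed the discrepancy, but the fix you propose does not work; the honest route is to accept $s_1=\tfrac{\alpha-1/\alpha}{2\ln\alpha}$ and note that the lemma, as literally stated, is stronger than what the argument supports.
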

\begin{proof}
The proof follows a very similar structure as the proof of \Lemma{rfpa_technical}.
Recall that the optimal bidding strategy for the advertisers is to bid some constant multiple $c \geq 1$ of their value.

\paragraph{Case 1: $\bs_j / \bo_j < 1/\alpha$.}
Let $Q_1 = \setst{j \in Q}{\bs_j / \bo_j \leq 1/\alpha}$.
Define $m_1$ as the probability that the optimal bidder is matched to query $j$ whenever $j \in Q_1$.
Furthermore, let $\Spend_1(j)$ be the spend in this query.
Then $m_1 = 0$ and $\Spend_1(j) = \bs_j \cdot \frac{\alpha - 1/\alpha}{2 \ln \alpha} \geq \frac{\alpha - 1/\alpha}{2\ln \alpha} \vs_j$ where the inequality is from \Lemma{wp1_win_lb}.

\paragraph{Case 2: $1/\alpha \leq \bs_j / \bo_j \leq \alpha$.}
For $\beta \in [1/\alpha, \alpha]$, let $Q_2^\beta = \setst{j \in Q}{\bs_j / \bo_j = 1/\beta}$.
Define $m_2^\beta$ as the probability of matching $i^*$ to query $j$ whenever $j \in Q_2^\beta$ and let $\Spend_2^{\beta}(j)$ be the expected spend on query $j$.
Then $m_2^\beta = \frac{1}{2}\left( 1 + \frac{\ln \beta}{\ln \alpha} \right)$ and
\begin{align*}
    \Spend_2^\beta(j)
    & = \bo_j \cdot \frac{\beta - 1/\alpha}{2\ln \alpha}
      + \bs_j \cdot \frac{1/\beta - 1/\alpha}{2 \ln \alpha} \\
    & = \frac{\bs_j}{\beta} \cdot \frac{\beta - 1/\alpha}{2\ln \alpha}
      + \bs_j \cdot \frac{1/\beta - 1/\alpha}{2 \ln \alpha} \\
    & \geq \vs_j \cdot \frac{(1-1/\alpha)(1+1/\beta)}{2 \ln \alpha}.
\end{align*}

\paragraph{Case 3: $\alpha < \bs_j / \bo_j$.}
Finally, let $Q_3 = \setst{j \in Q}{\bs_j / \bo_j \geq \alpha}$.
For any $j \in Q_1$, define $m_3$ as the probability that the optimal bidder is matched in $j$, and $\Spend_3(j)$ be the expected spend in $j$. Then,
\[
    m_3 = 1 \quad \text{and} \quad \Spend_3(j) \geq 0.
\]
The remainder of the proof is identical to that of \Lemma{rfpa_technical}.
\end{proof}
\begin{proof}[Proof of \Theorem{randauction_poa}]
    One can numerically check that $\alpha = 5/9$ gives $f(\alpha) \geq 1/1.897$ by taking $\gamma = 0.528$ in the $\max$.
\end{proof}
\section{Proof of \Theorem{RandPoA}}
\AppendixName{RandPoA}
First, we begin with a lemma that states that if an advertiser is facing $k$ other bids that weakly dominates its own
then it wins with probability $1/(k+1)$.
However, for our proof, we only require such a statement for a specific form of bids.
\begin{lemma}
    Let $b_i \in \bR$ and $b \geq b_i$.
    For any $S \subseteq [n] \setminus \{i\}$,
    $\alloc_i(b \cdot \bme_S + b_i \cdot \bme_i) \leq 1/(|S| + 1)$.
\end{lemma}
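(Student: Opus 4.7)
The plan is to combine the three parts of Assumption~\ref{assumption} in a short sequence: monotonicity (to replace $b_i$ with $b$), anonymity (to equate the winning probabilities of the symmetric bidders), and the feasibility constraint $\sum_k \alloc_k \leq 1$.

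First I would use monotonicity of $\alloc_i$ in its own coordinate. Since $b_i \leq b$, increasing bidder $i$'s bid from $b_i$ to $b$ can only weakly increase $\alloc_i$, so
\[
\alloc_i(b \cdot \bme_S + b_i \cdot \bme_i) \;\leq\; \alloc_i(b \cdot \bme_S + b \cdot \bme_i) \;=\; \alloc_i(b \cdot \bme_{S \cup \{i\}}).
\]
Next, apply anonymity to the symmetric bid vector $b \cdot \bme_{S \cup \{i\}}$: for any $j \in S$, swapping the $i$-th and $j$-th coordinates leaves this vector unchanged (both coordinates equal $b$), so $\alloc_i(b \cdot \bme_{S \cup \{i\}}) = \alloc_j(b \cdot \bme_{S \cup \{i\}})$. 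Hence all $|S|+1$ bidders in $S \cup \{i\}$ win with the same probability at this bid profile.

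Finally, summing this common value over $k \in S \cup \{i\}$ and using the allocation feasibility bound yields
\[
(|S|+1)\, \alloc_i(b \cdot \bme_{S \cup \{i\}}) \;=\; \sum_{k \in S \cup \{i\}} \alloc_k(b \cdot \bme_{S \cup \{i\}}) \;\leq\; \sum_{k \in [n]} \alloc_k(b \cdot \bme_{S \cup \{i\}}) \;\leq\; 1,
\]
so $\alloc_i(b \cdot \bme_{S \cup \{i\}}) \leq 1/(|S|+1)$. Chaining this with the monotonicity inequality from the first step gives the claim. There is no real obstacle here: the only subtlety is that monotonicity is only assumed coordinatewise in each bidder's own bid, but that is exactly what we need to raise $b_i$ to $b$ without touching the other coordinates; after that the anonymity-plus-feasibility pigeonhole is immediate.
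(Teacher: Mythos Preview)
Your proposal is correct and follows essentially the same approach as the paper: monotonicity to raise $b_i$ to $b$, anonymity to equate the allocation probabilities across $S\cup\{i\}$, and the feasibility bound $\sum_k \alloc_k \leq 1$ to conclude. The paper's proof is a condensed version of exactly these three steps.
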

\begin{proof}
    Since $\pi_i(b_1, \ldots, b_n)$ is non-decreasing in $b_i$, we have $\alloc_i(b\cdot \bme_S + b_i \cdot \bme_i) \leq \alloc_i(b\cdot \bme_{S \cup \{i\}}) = \alloc_j(b\cdot \bme_{S \cup\{i\}})$ for all $j \in S$ where the last equality is by anonymity.
    As $\sum_{j \in [n]} \alloc_i(b\cdot \bme_{S \cup \{i\}}) \leq 1$, this implies that $\alloc_i(b\cdot \bme_S + b_i \cdot \bme_{\{i\}}) \leq 1/(|S| + 1)$.
\end{proof}

For the remainder of this section, we assume that there are $2k$ bidders, for some integer $k \geq 1$ and we fix an auction with allocation rule $\alloc$ and pricing function $\price$ that satisfies Assumption~\ref{assumption}.
Further, fix a constant $\gamma > 0$.
To construct out instance, we first define a number of parameters.
Let $\eps > 0$ be a constant.
\begin{enumerate}
    \item Let $\alloc^* = \lim_{b \to \infty} \pi_1(b \cdot \bme_1)$.
    We assume that $\alloc^* \geq 1/2$ (otherwise the auction has a trivial PoA of at least $2$ from just a single bidder).
    \item Let $B_1 \in \bR$ be such that $\pi_1(B_1 \bme_1) > \alloc^* - \eps$.
    Thus, $B_1$ is the bid that a bidder needs to place to win the item with maximum probability (up to an additive $\eps$) if no other bidders place a positive bid.
    \item Define the function $M(b) = \price_1(b\cdot \bme_{[k+1]}) / \alloc_1(B_1 \cdot \bme_1) + \price_1(B_1 \cdot \bme_1)$.
    Note that $\lim_{b \to \infty} M(b) = +\infty$ by the pricing assumption in Assumption~\ref{assumption}.
\end{enumerate}
We are now ready to construct our lower bound instance.
Our construction has $2k$ queries which we denote by $q_1, \ldots, q_{2k}$.
These queries come in two types.
\begin{enumerate}
    \item For query $q_i$ where $i \in [k]$, bidder $i$ has value $M(B_2)$ and all other bidders have value $0$.
    \item For query $q_{k+i}$ where $i \in [k]$, bidder $k+i$ has value $(M(B_2) - \price_1(B_1 \bme_1)) \cdot \pi_1(B_1 \bme_1) - \eps$.
    On the other hand, bidders $1, \ldots, k$ have value $\eps$.
\end{enumerate}

\begin{proof}[Proof of \Theorem{RandPoA}]
    Fix $0 < \eps < \min\{1/4, \gamma / 8\}$.
    Choose $B_1$ (depending on $\eps$) as in the aforementioned construction and let $B_2$ be such that $M(B_2) \geq k$.
    We relegate the choices of $\eps, B_2$ to the end of the proof to achieve the desired result.

    Consider the following bids.
    \begin{enumerate}
        \item For $i \in [k]$, bidder $i$ bids $B_1$ on query $q_i$ and all other bidders bid $0$.
        \item For $i \in [k]$, bidders $1, \ldots, k$ bid $B_2$, bidder $k+i$ bids at most $B_2$, and all other bidders bid $0$.
    \end{enumerate}
    
    \paragraph{Equilibrium analysis.}
    We first check that for $i \in [k]$, bidder $k+i$ has no incentive to bid more than $B_2$.
    Indeed, if bidder $k+i$ does bid more than $B_2$ then its payment (when it wins) is at least $\price_1(B_2 \cdot \bme_{[k+1]})$
    (which follows from the assumption that prices are non-decreasing in the bids).
    By definition of $M(B_2)$, we have that $\price_1(B_2 \cdot \bme_{[k+1]}) = (M - \price_1(B_1 \cdot \bme_1)) \cdot \pi_1(B_1 \cdot \bme_1)$
    and this is strictly more than the value that bidder $k+i$ has for query $q_{k+i}$.
    
    Now we check that for $i \in [k]$, bidder $i$ has no incentive to change its bid to gain more than $\gamma$ in total value.
    For $i \in [k]$, bidder $i$ wins query $i$ with probability at least $\alloc^* - \eps$.
    So its utility is at least $M(B_2) \pi_1(B_1 \cdot \bme_1) \geq M(B_2) \cdot (\alloc^* - \eps)$.
    The maximum utility it can achieve is upper bounded by $M(B_2) \cdot \alloc^* + k\eps$.
    Note that (since we assume $M(B_2) \geq k$ and $\pi^* \geq 1/2$)
    \[
        \frac{M(B_2) \cdot \pi^* + k \eps}{M(B_2) \cdot (\pi^* - \eps)} 
        \leq \frac{\pi^*+\eps}{\pi^* - \eps}
        \leq \frac{1/2 + \eps}{1/2 - \eps}
        = \frac{1+2\eps}{1-2\eps}
        \leq (1+2\eps)(1+4\eps)
        = 1 + 6 \eps + 8 \eps^2
        \leq 1 + 8 \eps
        \leq 1 + \gamma,
    \]
    where the third inequality is \Claim{RecipIneq}, the fourth inequality is the assumption that $\eps \leq 1/4$
    and the last inequality is the assumption that $\eps \leq \gamma / 8$.
    In particular, bidder $i$ cannot increase its value by a factor of $(1+\gamma)$.
    
    Next, we check that their ROS constraint is also not violated.
    Indeed, the total payment made by advertiser $i$ is at most
    \begin{align*}
        \price_1(B_1 \cdot \bme_1) & + \price_1(b \cdot \bme_{[k+1]}) \\
        & = \price_1(B_1 \cdot \bme_1) \pi_1(B_1 \cdot \bme_1)
        + (M(B_2) - \price_1(B_1 \cdot \bme_1)) \cdot \pi_1(B_1 \cdot \bme_1) \\
        & = M(B_2) \pi_1(B_1 \cdot \bme_1),
    \end{align*}
    which is the value that bidder $i$ receives from the first query alone.
    So their ROS constraint is satisfied.
    
    \paragraph{Liquid welfare and PoA.}
    Let $i \in [k]$.
    In the equilibrium above, the contribution from advertiser $i$ to the liquid welfare is bounded above by $M(B_2) \cdot \pi^* + k\eps$
    while the contribution from advertiser $k+i$ to the liquid welfare is bounded above by $\frac{M(B_2) \cdot \pi^*}{k+1}$.
    Let $\LW(EQ)$ be the liquid welfare of this equilibrium.
    Then, we have
    \begin{equation}
        \EquationName{randomized_lb1}
        \LW(EQ) \leq k \cdot M(B_2) \cdot \pi^* + k^2\eps + \frac{k M(B_2) \cdot \pi^*}{k+1}.
    \end{equation}
    On the other hand, the optimal allocation is to allocate query $j$ to bidder $j$ for $j \in [2k]$ for a welfare of
    \begin{equation}
        \EquationName{randomized_lb2}
        \LW(OPT) = k(1 + \pi_1(B_1 \bme_1)) \cdot M(B_2) - k\pi_1(B_1 \cdot \bme_1) \price_1(B_1 \cdot \bme_1) - k\eps. 
    \end{equation}
    We now set parameters.
    Let $\delta_1 = \min\{1/2, \delta / 11\}$.
    We set $\eps = \min\{ 1/4, \gamma / 8, 2\delta_1 / (1+\delta_1) \}$ so that $1/(2-\eps) \leq (1+\delta_1) / 2$.
    Having chosen $\eps$, we can fix $B_1$ such that $\pi_1(B_1 \bme_1) > \pi^* - \eps$.
    We now choose $B_2$ sufficiently large such that
    \[
        \LW(EQ) \leq k M(B_2) \pi^* \cdot \left(1 + \frac{1}{k+1} \right) (1 + \delta_1)
    \] 
    and
    \[
        \LW(OPT) \geq k(1 + \pi_1(B_1 \cdot \bme_1)) M(B_2) \cdot (1 - \delta_1).
    \]
    Thus, we have
    \begin{align*}
        \frac{\LW(EQ)}{\LW(OPT)}
        & \leq \frac{\pi^*(1+1/(k+1))}{1+\pi_1(B_1 \cdot \bme_1)} \cdot \frac{1+\delta_1}{1-\delta_1} \\
        & \leq \frac{\pi^*(1+1/(k+1))}{1+\pi^* - \eps} \cdot \frac{1+\delta_1}{1-\delta_1} \\
        & \leq \frac{1+1/(k+1)}{2 - \eps} \cdot \frac{1+\delta_1}{1-\delta_1} \\
        & \leq \frac{1+1/(k+1)}{2} \cdot \frac{(1+\delta_1)^2}{1-\delta_1} \\
        & \leq \frac{1+1/(k+1)}{2} \cdot (1+\delta_1)^2(1+2\delta_1) \\
        & \leq \frac{1+1/(k+1)}{2} \cdot (1+\delta).
    \end{align*}
    The second inequality is by definition of $B_1$,
    the third inequality is because the second line is maximized at $\pi^* = 1$,
    the fourth inequality is by our choice of $\eps$,
    the fifth inequality is by \Claim{RecipIneq},
    and the last inequality uses that $\delta_1 \leq 1/2$ and $\delta_1 \leq \delta / 11$.
    Multiplying top and bottom by $k+1$ gives the expression in the claim.
\end{proof}

\begin{claim}
    \ClaimName{RecipIneq}
    Suppose $x \in [0, 1/2]$.
    Then $\frac{1}{1-x} \leq 1+2x$.
\end{claim}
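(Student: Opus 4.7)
The plan is to reduce this to a polynomial inequality by clearing the denominator, then verify non-negativity directly on the interval $[0, 1/2]$.

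Specifically, since $x \in [0, 1/2]$ we have $1 - x > 0$, so multiplying both sides of $\frac{1}{1-x} \leq 1 + 2x$ by $1 - x$ preserves the direction of the inequality. The statement is therefore equivalent to
\[
1 \leq (1+2x)(1-x) = 1 + x - 2x^2,
\]
which in turn is equivalent to $x(1 - 2x) \geq 0$. On $[0, 1/2]$ both factors are non-negative, so the product is non-negative and we are done.

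There is essentially no obstacle here: the entire argument is a two-line algebraic manipulation followed by a sign check. The only thing to watch is to note explicitly that $1-x > 0$ on $[0, 1/2]$ so that multiplying through does not flip the inequality.
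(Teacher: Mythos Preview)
Your proof is correct and essentially identical to the paper's: both multiply through by $1-x>0$ and check that $(1+2x)(1-x)=1+x-2x^2\geq 1$, with the paper bounding $2x^2\leq x$ directly and you factoring $x-2x^2=x(1-2x)\geq 0$. There is no meaningful difference between the two arguments.
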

\begin{proof}
    If $x \in [0, 1/2]$ then $(1+2x)(1-x) = 1 + x - 2x^2 \geq 1 + x - x = 1$.
\end{proof}
\end{document}